\definecolor{webgreen}{rgb}{0,.5,0}
\definecolor{webbrown}{rgb}{.6,0,0}
\begin{document}

\theoremstyle{plain}
\newtheorem{theorem}{Theorem}
\newtheorem{corollary}[theorem]{Corollary}
\newtheorem{lemma}[theorem]{Lemma}
\newtheorem{proposition}[theorem]{Proposition}

\theoremstyle{definition}
\newtheorem{definition}[theorem]{Definition}
\newtheorem{example}[theorem]{Example}
\newtheorem{conjecture}[theorem]{Conjecture}

\theoremstyle{remark}
\newtheorem{remark}[theorem]{Remark}

\title{Spacetime symmetries and geometric diffusion}

\author{Marc Basquens} \address{Marc Basquens, Department of Energy Technology, 
Royal Institute of Technology (KTH), 10044 Stockholm, Sweden}
\email{marc5@kth.se}

\author{Antonio Lasanta} \address{Antonio Lasanta, Departamento de \'Algebra, Facultad de Educaci\'on, Econom\'ia y Tecnolog\'ia de Ceuta, Universidad de Granada, Cortadura del Valle, s/n, 51001 Ceuta, Spain. Instituto Carlos I de F\'isica Te\'orica y Computacional, Universidad de Granada, E-18071 Granada, Spain. Nanoparticles Trapping Laboratory, Universidad de Granada, Granada, Spain.}
\email{alasanta@ugr.es}

\author{Emanuel Momp\'o} \address{Emanuel Momp\'o, Departamento de Matem\'atica Aplicada, Grupo de Din\'amica No Lineal, Universidad Pontificia Comillas, C. de Alberto Aguilera 25, 28015 Madrid, Spain. Instituto de Investigaci\'on Tecnol\'ogica (IIT), Universidad Pontificia Comillas, 28015 Madrid, Spain}
\email{egmompo@comillas.edu}

\author{Valle Varo} \address{Valle Varo,  Institute for Research in Technology, Universidad Pontificia Comillas, C. de Sta. Cruz de Marcenado, 26, 28015 Madrid, Spain}
\email{mvvaro@comillas.edu}

\author{Eduardo J S Villase\~nor} \address{Eduardo J S Villase\~nor, Departamento de Matem\'aticas,  Universidad Carlos III de Madrid, Avda.\  de la Universidad 30, 28911 Legan\'es, Spain. Grupo de Teorías de Campos y Física Estadística, Unidad Asociada al IEM-CSIC, Serrano 123, 28006 Madrid, Spain}
\email{ejsanche@math.uc3m.es}

\thanks{This work has been supported by the Spanish Ministerio de Ciencia 
Innovación y Universidades-Agencia Estatal de 
Investigación Grant No. AEI/PID2020–116567GB-C22 and PID2021-128970OA-I00. E. J. S. V. is supported by the Madrid Government (Comunidad de Madrid-Spain) under the Multiannual Agreement with UC3M in the line of Excellence of University Professors (EPUC3M23), and in the context of the V PRICIT (Regional Programme of Research and Technological Innovation). A.L. was also partly supported by FEDER/Junta de Andalucía Consejería de Universidad, Investigación e Innovación, and by the European Union, through Grants No. A-FQM-644-UGR20, and No. C-EXP-251-UGR23. The authors want to thank F. Barbero  for many interesting
discussions and comments. }

\subjclass[2020]{53B30, 53Z05}

\title{Spacetime symmetries and geometric diffusion}
\begin{abstract}

We examine relativistic diffusion through the frame and observer bundles associated with a Lorentzian manifold $(M,g)$. Our focus is on spacetimes with a non-trivial isometry group, and we detail the conditions required to find symmetric solutions of the relativistic diffusion equation. Additionally, we analyze the conservation laws associated with the presence of Killing vector fields on $(M,g)$ and their implications for the expressions of the geodesic spray and the vertical Laplacian on both the frame and the observer bundles. Finally, we present several relevant examples of symmetric spacetimes.

\end{abstract}

\maketitle

\section{Introduction}

Relativistic kinetic theory has been studied almost since the inception of general relativity, and its formulation has been developed by many scientists over the years (see \cites{sarbach2013relativistic, sarbach2014geometry, acuna2022introduction} for a historical overview). The kinetic theory of relativistic gases, initially proposed by Synge in 1934 \cite{synge1934energy}, gained significance after the 1960s due to technological advancements and discoveries such as quasars and the cosmic microwave background radiation. Although most of the underlying geometric ideas had been developed in the 1960s by Berger \cites{berger1965lectures, besse2012manifolds} in the Riemannian context, the work of J. Ehlers in the 1970s \cites{ehlers, ehlers1973survey} was particularly important, providing a coherent and robust mathematical framework where relativistic kinetic theory takes place on appropriate submanifolds of the tangent bundle of a spacetime (see \cite{sarbach2014geometry} for a recent review).

The Fokker--Planck (FP) equation describes diffusion processes, with applications in various domains of physics and engineering. In astrophysics, for instance, it is employed to model the theory of cosmic rays \cite{lasuik2019}. Similarly, Fokker--Planck equations are utilized in plasma physics to analyze the effects of near-miss encounters between ions (heavy particles) and electrons \cite{peeters2008}. 

However, unlike relativistic kinetic theory, the field of relativistic diffusion is still in its early stages. The covariant description poses significant challenges. In particular, there are numerous approaches to deriving a Fokker--Planck equation from a system of stochastic differential equations. Due to these ambiguities, multiple models exist in the literature under the label `relativistic Fokker--Planck equation'; for further details, we refer the readers to \cites{debbasch2007relativistic,chevalier2008relativistic,dunkel2009}.
Besides the theoretical problems, experiments to test the proposed theories are difficult to perform.

In the context of differential stochastic equations on a Riemannian manifold, the orthonormal frame bundle proves to be particularly useful. This framework, developed by \cite{Elworthy1988} and \cite{Hsu2002}, is central to the construction of Brownian motion in Riemannian manifolds. In the context of Minkowski spacetime, the relativistic description of a Brownian process was pioneered by Dudley in 1966 \cite{dudley1966lorentz}. Nevertheless, the generalization of Dudley's work to the framework of general relativity (i.e., to a generic Lorentzian manifold $(M,g)$) had to await the research of J. Franchi and Y. Le Jan in 2005 \cite{franchi_jan}. Their diffusion process, initially defined at the level of pseudo-orthonormal frames $\mathcal{SO}^+(M)$, incorporates Brownian noise only in the vertical directions and projects onto a diffusion process on the pseudo-unit tangent bundle (unit observer bundle $UM$).   The infinitesimal generator of their $\mathcal{SO}^+(M)$-valued Stratonovitch stochastic differential equation decomposes into the sum of the vertical Laplacian and the horizontal vector field generating the geodesic flow. This infinitesimal generator allows us to write a relativistic Fokker--Planck equation. This Fokker--Planck equation was used by Calogero \textit{et al.} to describe diffusion in different cosmological settings \cites{felix2010relativistic,calogero2011kinetic,calogero2012cosmological, felix2013newtonian,calogero2013cosmology,felix2014spatially,alho2015dynamics}.

Building upon these ideas, A. Franchi and Y. Le Jan  \cite{franchi_jan} utilized the bundle $\mathcal{SO}^+(M)$ of direct pseudo-orthonormal frames, with fibers modelled on the special Lorentz group (and having their first element in the positive half of the unit pseudo-sphere in the tangent space) to extend the concept of relativistic diffusion to general Lorentzian manifolds. This approach defines a Stratonovich stochastic differential equation that takes values in the $\mathcal{SO}^+(M)$ group, similar to a Langevin equation.  When projected, this equation naturally generates a diffusion process on the mass shell. Moreover, by following what is done in the Riemannian case, we can induce a pure diffusion on $M$ through a pullback operation, which results in a Fokker--Planck-type equation.  

Recent works related to these methodologies include Serva's study \cite{serva2021brownian}, which diverges from the approaches of Franchi and Le Jan by focusing on massless particles and constructing Lorentz invariant processes. Additionally, the work by Andra and Rosyid \cite{AndraDoniRosyid} investigates relativistic diffusion incorporating both diffusion and friction within the framework of $f(R)$-gravity. Complementing these, the paper by Haba \cite{haba2017thermodynamics}, also within $f(R)$-gravity theory, examines the cosmological implications of relativistic diffusion, providing a comprehensive framework for understanding its impact on the universe's large-scale structure and thermodynamic properties.

In this paper, we show that the perspective provided by the frame bundle $\mathcal{SO}^+(M)$ helps to incorporate spacetime symmetries and define symmetric solutions to the Fokker--Planck equations both in the frame and observer bundle description. The careful way of incorporating symmetries into the equations has been exemplified in the case of the Vlasov\footnote{In this context the Vlasov (Liouville) equation describes the evolution of the one-particle distribution function within relativistic kinetic theory, asserting that particles are conserved in phase space. Fokker-Planck reduces to Vlasov equation when the strength of the diffusion process $\sigma$ vanishes.} (Liouville) equation---as demonstrated in works such as  \cite{sarbach2013relativistic} by Sarbach and Zannias---particularly on the mass shell. Here, we present a generalization that encompasses the previous result as a special case while also enabling the incorporation of diffusion.

The structure of the paper is as follows: After this introduction, in Section \ref{section:om}, we present the geometric structures necessary to describe diffusion processes in the frame bundle of a spacetime. We discuss the infinitesimal generator $\mathcal{L}$ of the Franchi--Le Jan-process, and we prove that, when the isometry group of the spacetime is non-trivial, its symmetry-reduced version is consistent. We also elucidate the connection of $\mathcal{L}$  with the corresponding FP equation (symmetric or not) on the $UM$ bundle. In Section \ref{Sec_Tangent}, following a reasoning similar to that of Elhers in \cites{ehlers, ehlers1973survey}, we present a justification of the FP equations based on the conservation of the average number of particle world lines crossing any Cauchy hypersurface. We also define the particle current density, entropy current, and the energy-momentum tensor, along with some of their most important properties. In Section \ref{sec:examples}, we provide the explicit equations in several relevant spacetimes: flat Friedmann–Lemaître–Robertson–Walker cosmologies, the exterior Schwarzschild spacetime, and the Nariai spacetime. Finally, in Section \ref{sec:concl}, we present our conclusions and outline some future lines of research.

\section{Diffusion in the orthonormal frame bundle}
\label{section:om}

\subsection{Geometric framework}

The orthonormal frame bundle provides the natural arena for understanding stochastic phenomena like Brownian motion on a Riemannian manifold \cite{Hsu2002}.
As discussed by \cites{franchi_jan,franchi2012hyperbolic}, this also holds for Lorentzian manifolds of dimension $1+n$. 

Let $(M,g)$ be an $(1+n)$-dimensional oriented, and time-oriented Lorentzian manifold of signature $(-,+,\cdots,+)$. A frame $u\in  \mathcal{F}_x(M)$ at a point $x \in M$ is a linear isomorphism  $u:\mathbb{R}^{1+n}\rightarrow T_xM$. The set 
\[\mathcal{F}(M)=\bigcup_{x\in M} \mathcal{F}_x(M) \ ,\]
of all frames on $M$ constitutes a principal $GL(1+n, \mathbb{R})$-bundle over $M$. This principal bundle is referred to as the frame bundle of $M$. In terms of local coordinates $x^\mu$ on $M$, a frame $u$ can be written in the form
\[u(e_I)=e^\mu_I\frac{\partial }{\partial x^\mu}\,,\quad \det{e^\mu_I}\neq 0\,,\]
where $e_{I}$, $0\leq I\leq n$, denotes the canonical basis of $\mathbb{R}^{1+n}$. Hence $(x^\mu, e ^\mu_I)$ are local bundle coordinates for $\mathcal{F}(M)$. If we focus on the set of $g$-orthonormal frames instead of all linear frames we can construct the $O(1+n)$-principal bundle of orthonormal frames denoted by $\mathcal{O}(M)$. Furthermore, an additional reduction of $\mathcal{F}(M)$ can be considered by changing the gauge group from $GL(1+n, \mathbb{R})$ to the restricted (proper, orthochronous) Lorentz group $SO^+(1,n)$:
\[\mathcal{SO}^+(M)=\bigcup_{x\in M} \mathcal{SO}^+_x(M),\]
where 
\begin{align*}
\mathcal{SO}^+_x(M)&=\{u\in \mathrm{Hom}^+(\mathbb{R}^{1,n},T_xM)\,:\,   g_x(u(e_I), u(e_J)) =\langle e_I, e_J\rangle_{1,n}\,,  0\leq I, J\leq n\,, \\
   &\hspace{5.3cm} u(e_0)\textrm{ future pointing }\}\ .
\end{align*}
Here and in the following we will assume that $e_I\in \mathbb{R}^{1+n}=\mathbb{R}^{1,n}$ satisfy 
\[
\langle e_I, e_J\rangle_{1,n} = \eta_{_{IJ}} = \left\{ 
\begin{array}{rcl}
-1     & \mathrm{if} &\,  I=J=0\\
   1  & \mathrm{if} & \, I=J=i\neq 0\\
         0 &  &\, \mathrm{otherwise}
\end{array}
\right.,
\]
where $\eta_{IJ}$ represents the components of the Minkowski metric on $\mathbb{R}^{1,n}$.

It is straightforward to show that $\mathcal{SO}^+(M)$ is a manifold of dimension $1+n+\binom{1+n}{2}$. 
A frame $u\in \mathcal{SO}_x^+(M)$ is an isometry from the $(1+n)$-dimensional Minkowski space $(\mathbb{R}^{1,n},\langle \cdot,\cdot\rangle_{1,n})$ to $(T_xM,g_x)$ that preserves orientation and time-orientation. We will denote by  $\pi$ be canonical projection $\pi: \mathcal{SO}^+(M) \rightarrow M$ such that, given $u\in \mathcal{SO}_x^+(M)$, $\pi(u) = x$. Notice also that, by construction, $u(e_I)$ may be interpreted as the $I$-th vector of a basis for the tangent space $T_xM$, effectively serving as the $I$-th vector of the frame.

We will denote by $\Phi$ the canonical right action of $SO^+(1,n)$ on $\mathcal{SO}^+(M)$,
\begin{align*}
    \Phi: \mathcal{SO}^+(M)\times SO^+(1,n)&\rightarrow \mathcal{SO}^+(M), \\
    (u,A)& \mapsto \Phi(u,A)=\Phi_u(A)=\Phi^A(u)=uA,
\end{align*}
where $uA$ denotes the map $(uA)(e_I) := u(Ae_I) = u(e_J A^J_{\, I}) = u(e_J) A^J_{\, I}$. This action enables the introduction of the so called \textit{fundamental vector fields} on $\mathcal{SO}^+(M)$. These vector fields are given by the image of the mapping $\lambda: \mathfrak{so}(1,n) \rightarrow \mathfrak{X}(\mathcal{SO}^+(M))$, defined by
\[
(\lambda X)(u) := T_e\Phi_u(X), \quad X \in \mathfrak{so}(1,n),
\]
highlighting that each fundamental vector field is inherently a vertical field, satisfying \[T\pi(\lambda X) = 0,\] and obeys the relation $T\Phi^A(\lambda X) = \lambda( \mathrm{Ad}_{A^{-1}}X)$.

In aligning our methodology with the results of \cite{franchi_jan}, we endow the Lie algebra $\mathfrak{so}(1,n)$ with the trace form $B$ defined by
\[
B(X,Y) = -\frac{1}{2} \mathrm{tr}(XY) \ .
\]
The $\mathfrak{so}(1,n)$-basis $X_{IJ}$, for $0 \leq I < J \leq n$, defined by
\[
(X_{IJ})^K\,_{L} = \delta^K\,_{I}\eta_{JL} - \delta^K\,_J\eta_{IL},
\]
is $B$-orthonormal
\[
B(X_{I_1J_1}, X_{I_2J_2}) = \eta_{I_1I_2}\eta_{J_1J_2}.
\]
Notice that  $X_{0i}$ and $X_{ij}$ correspond to generators of boosts and rotations, respectively. 
The $\mathrm{ad}$-invariant inner product $B$ allows us to equip $SO^+(1,n)$ with a bi-invariant semi-Riemannian metric, denoted as $(\!\!(\cdot , \cdot)\!\!)$, which is defined via the pullback of $B$ by left-translations, $(\!\!(\cdot , \cdot)\!\!)_A = L^*_AB$. The Levi-Civita connection associated with this metric, satisfies the relation, 
\[D_X Y=\frac{1}{2}[X,Y]\,,\quad \forall X,Y\in \mathfrak{so}(1,n)\,.\]
Hence, in terms of the map $\lambda$, the quadratic Casimir element $\mathcal{C}:=c(B)$ associated with $B$ acts as a second-order differential operator on $C^\infty(\mathcal{SO}^+(M))$ through
\[
  \mathcal{C}f:= c(B)f= \mathrm{div} (D f) = \sum_{0\leq I<J \leq n} \eta_{II}\eta_{JJ} (\!\!(X_{IJ}, D_{X_{IJ}} Df)\!\!) = \sum_{0\leq I<J \leq n} \eta_{II}\eta_{JJ} X_{IJ}(X_{IJ} f)\,.
\]
Following again the notation of reference \cite{franchi_jan}, we set  \[V_i := \lambda(X_{0i})\,,\quad 1<i< n\,, \quad  V_{ij} := \lambda(X_{ij}) \quad  \textrm{ for  }  1 \leq i < j \leq n,\] 
and we will write 
\begin{align*}
    \mathcal{C} =\sum_{0\leq I<J \leq n} \eta_{II}\eta_{JJ} \lambda(X_{IJ})^2 =\sum_{1\leq i < j \leq d} V_{ij}^2 -  \sum_{i=1}^d V_i^2  \, .
\end{align*} 
Notice that $SO^+(1,n)$ is not compact. 
The  non-compactness is responsible for the fact that the bi-invariant metric $(\!\!(\cdot , \cdot)\!\!)$  is not Riemannian, and the operator $\mathcal{C}$ is neither positive nor negative definite. We will see that this fact does not prevent us from defining a diffusion operator through $\mathcal{C}$.

Finally, using the Levi-Civita connection $\nabla$ on $(M,g)$, the tangent space $T\mathcal{SO}^+(M)$ can be decomposed in the form
\[
T\mathcal{SO}^+(M) = \mathcal{V}\oplus \mathcal{H} \ ,
\]
where the fibers $\mathcal{V}_u$ of the vertical bundle $\mathcal{V}$ satisfy $\mathcal{V}_u=\ker T_u\pi$ and, on the other hand, the horizontal bundle 
\[\mathcal{H}_u=\{H_{\mathrm{v}}(u)\,:\, \mathrm{v}\in \mathbb{R}^{1+n}\},\] 
is defined in terms of the horizontal vector fields $H_{\mathrm{v}}$ that are characterized by the following construction: given $\mathrm{v}\in \mathbb{R}^{1+n}$ and $u\in \mathcal{SO}^+(M)$, there is a unique horizontal vector field $H_{\mathrm{v}}$ satisfying  \[\theta(H_{\mathrm{v}}(u))=\mathrm{v}\,,\]
where $\theta=e_I \theta^I$ is the $\mathbb{R}^{n+1}$-valued soldering one-form (canonically) defined  by
\[ T_u\pi (v) = u(\theta(v))=\theta^I(v) u(e_I)\,,\quad v\in T_u\mathcal{SO}^+(M)\,.\]
It is also easy to show that $H_{\alpha_1 \mathrm{v}_1+\alpha_2 \mathrm{v}_2 }=\alpha_1 H_{\mathrm{v}_1}+\alpha_2 H_{\mathrm{v}_2}$ and $T\Phi^A(H_\mathrm{v})=H_{A^{-1}\mathrm{v}}$. This implies that, not being right invariant, the vector fields $H_\mathrm{v}$ are not horizontal lifts of vector fields on $M$. In the case $\mathrm{v}=e_I$, we will use the notation $H_I:=H_{e_I}$ to make contact with \cite{franchi_jan}. 

The fields $H_{\mathrm{v}}$ are often called \textit{basic vector fields} and satisfy the following \cite{Bishop}:
\begin{theorem}[Geodesics]
\label{TGeodesics}
Let $\tilde{\gamma}$ be the horizontal lift through $u \in \pi^{-1}(\gamma(0))$ of a smooth curve $\gamma$ in $M$ to $\mathcal{SO}^+(M)$. Then, $\gamma = \pi \circ \tilde{\gamma}$ is a geodesic if and only if there exists a vector $\mathrm{v} = \mathrm{v}^I e_I \in \mathbb{R}^{1+n}$ such that $\tilde{\gamma}$ is an integral curve of the horizontal vector field $H_{\mathrm{v}}$. This condition is satisfied if and only if
\begin{equation}
    \tilde{\gamma}'(s) = \mathrm{v}^I H_I(\tilde{\gamma}(s)),
\end{equation}
where $\tilde{\gamma}'(s)$ denotes the derivative of $\tilde{\gamma}$ with respect to the parameter $s$.
\end{theorem}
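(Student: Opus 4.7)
The plan is to translate the statement into a computation about parallel transport by exploiting the two characterizing properties of the horizontal lift: its velocity is horizontal, and the frame vectors $\tilde{\gamma}(s)(e_I)\in T_{\gamma(s)}M$ are parallel along $\gamma$ (this is the content of using the Levi-Civita connection to define $\mathcal{H}$).

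First I would unpack what a horizontal curve looks like. Since $\tilde{\gamma}'(s)\in\mathcal{H}_{\tilde{\gamma}(s)}$, there is a unique $\mathrm{v}(s)\in\mathbb{R}^{1+n}$ with $\tilde{\gamma}'(s)=H_{\mathrm{v}(s)}(\tilde{\gamma}(s))$; explicitly $\mathrm{v}(s)=\theta(\tilde{\gamma}'(s))$ by the defining property of the soldering form. Writing $\mathrm{v}(s)=\mathrm{v}^I(s)e_I$ and using linearity $H_{\alpha_1 \mathrm{v}_1+\alpha_2\mathrm{v}_2}=\alpha_1 H_{\mathrm{v}_1}+\alpha_2 H_{\mathrm{v}_2}$, this becomes $\tilde{\gamma}'(s)=\mathrm{v}^I(s)H_I(\tilde{\gamma}(s))$. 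Projecting with $T\pi$ and applying the definition $T_u\pi(v)=u(\theta(v))$ gives
\[
\gamma'(s)=T\pi(\tilde{\gamma}'(s))=\tilde{\gamma}(s)(\mathrm{v}(s))=\mathrm{v}^I(s)\,\tilde{\gamma}(s)(e_I).
\]

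Next I would compute $\nabla_{\gamma'}\gamma'$. The horizontal lift is characterized precisely by the fact that each $\tilde{\gamma}(s)(e_I)$ is parallel transported along $\gamma$, so $\nabla_{\gamma'}\bigl(\tilde{\gamma}(s)(e_I)\bigr)=0$. Applying the Leibniz rule,
\[
\nabla_{\gamma'}\gamma'=\frac{d\mathrm{v}^I}{ds}\,\tilde{\gamma}(s)(e_I)+\mathrm{v}^I(s)\nabla_{\gamma'}\bigl(\tilde{\gamma}(s)(e_I)\bigr)=\dot{\mathrm{v}}^I(s)\,\tilde{\gamma}(s)(e_I).
\]
Since $\{\tilde{\gamma}(s)(e_I)\}_{I=0}^{n}$ is a basis of $T_{\gamma(s)}M$, the vanishing of $\nabla_{\gamma'}\gamma'$ is equivalent to $\dot{\mathrm{v}}^I(s)=0$ for every $I$, i.e.\ $\mathrm{v}\in\mathbb{R}^{1+n}$ is constant along $s$.

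Putting the two directions together: if $\gamma$ is a geodesic, the argument above forces $\mathrm{v}(s)\equiv\mathrm{v}$ constant, so $\tilde{\gamma}$ is an integral curve of the single horizontal field $H_{\mathrm{v}}=\mathrm{v}^I H_I$. Conversely, if $\tilde{\gamma}'(s)=\mathrm{v}^I H_I(\tilde{\gamma}(s))$ for some fixed $\mathrm{v}\in\mathbb{R}^{1+n}$, the same Leibniz computation yields $\nabla_{\gamma'}\gamma'=0$, hence $\gamma$ is a geodesic. The equivalence with the explicit ODE $\tilde{\gamma}'(s)=\mathrm{v}^I H_I(\tilde{\gamma}(s))$ is then just the already noted linearity of $\mathrm{v}\mapsto H_{\mathrm{v}}$.

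The only real subtlety---and where I would be most careful---is the justification of $\nabla_{\gamma'}\bigl(\tilde{\gamma}(s)(e_I)\bigr)=0$. This identity is not a tautology; it is the statement that, when $\mathcal{H}$ is built from the Levi-Civita connection, horizontality of $\tilde{\gamma}$ in $\mathcal{SO}^+(M)$ is equivalent to parallel transport of its frame components in $TM$. If desired, one can verify it locally by writing $\tilde{\gamma}(s)(e_I)=e^{\mu}_I(s)\partial_\mu$, computing $\nabla_{\gamma'}\bigl(e^{\mu}_I(s)\partial_\mu\bigr)$, and noting that the horizontal vector fields $H_I$ are defined exactly so that the resulting connection one-form vanishes on them. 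Once this ingredient is in hand, the rest is the short chain of equivalences above.
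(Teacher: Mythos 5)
Your proof is correct. The paper does not actually prove this theorem itself (it is quoted from the cited reference of Bishop and Crittenden), but your argument is the standard one: horizontality of $\tilde{\gamma}$ is equivalent to parallel transport of the frame vectors $\tilde{\gamma}(s)(e_I)$ along $\gamma$, the Leibniz rule then reduces $\nabla_{\gamma'}\gamma'=0$ to the constancy of $\mathrm{v}^I(s)=\theta^I(\tilde{\gamma}'(s))$ because the $\tilde{\gamma}(s)(e_I)$ form a basis of $T_{\gamma(s)}M$, and you correctly flag the one nontrivial input --- that the Levi-Civita horizontal distribution on the frame bundle encodes parallel transport of frames --- as the step requiring a local verification.
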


\begin{theorem}[Parallelization of $\mathcal{SO}^+(M)$]
\label{TParallelization}
The vector fields $H_I$,  $V_i$, and $V_{ij}$ provide a parallelization of $\mathcal{SO}^+(M)$; that is, for every $u\in \mathcal{SO}^+(M)$, the tangent vectors $H_I(u)$,  $V_i(u)$, and $V_{ij}(u)$ form a basis of $T_u\mathcal{SO}^+(M)$.
\end{theorem}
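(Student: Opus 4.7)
The plan is to exploit the vertical–horizontal splitting $T\mathcal{SO}^+(M) = \mathcal{V} \oplus \mathcal{H}$ that was introduced in the paragraph above the theorem, and argue that the $V_i, V_{ij}$ furnish a basis of each vertical fiber while the $H_I$ furnish a basis of each horizontal fiber. A dimension count comes first: there are $n+1$ basic fields $H_I$, $n$ boost fields $V_i$, and $\binom{n}{2}$ rotation fields $V_{ij}$, totalling $(n+1) + n + \binom{n}{2} = (n+1) + \binom{1+n}{2}$, which matches $\dim \mathcal{SO}^+(M)$. So it suffices to show linear independence pointwise.

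For the vertical part, I would use that the right action $\Phi$ is free and that the orbit map $\Phi_u: SO^+(1,n) \to \mathcal{SO}^+(M)$ parametrizes the fiber through $u$, so its differential $T_e\Phi_u$ is a linear isomorphism onto $\mathcal{V}_u = \ker T_u\pi$. Since $\{X_{IJ}\}_{0\leq I<J \leq n}$ is a basis of $\mathfrak{so}(1,n)$, the fundamental vectors $\lambda(X_{IJ})(u)$, i.e.\ the $V_i(u)$ and $V_{ij}(u)$, form a basis of $\mathcal{V}_u$ by construction of $\lambda$.

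For the horizontal part, the key tool is the soldering form $\theta$. By definition, $T_u\pi(H_{\mathrm{v}}(u)) = u(\theta(H_{\mathrm{v}}(u))) = u(\mathrm{v})$. Since $u : \mathbb{R}^{1+n} \to T_{\pi(u)}M$ is a linear isomorphism, the map $\mathrm{v} \mapsto T_u\pi(H_{\mathrm{v}}(u))$ is injective, hence so is $\mathrm{v} \mapsto H_{\mathrm{v}}(u)$. Taking $\mathrm{v}=e_I$ shows that the $H_I(u)$ are linearly independent, and together they span $\mathcal{H}_u$ (which has dimension $n+1$ because $T_u\pi$ restricted to $\mathcal{H}_u$ is an isomorphism onto $T_{\pi(u)}M$).

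Concatenating these two bases via the direct sum decomposition $T_u\mathcal{SO}^+(M) = \mathcal{V}_u \oplus \mathcal{H}_u$ yields a basis of the whole tangent space at every $u$, which is precisely the parallelization claim. No step presents a real obstacle; the only subtlety worth stating explicitly is the freeness of the $SO^+(1,n)$-action, which guarantees that $T_e\Phi_u$ has trivial kernel and that the vertical fundamental vectors are genuinely non-vanishing and independent.
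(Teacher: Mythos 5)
Your argument is correct: the dimension count matches $\dim\mathcal{SO}^+(M)=1+n+\binom{1+n}{2}$, the freeness of the $SO^+(1,n)$-action makes $T_e\Phi_u$ an isomorphism onto $\mathcal{V}_u$ so the $\lambda(X_{IJ})(u)$ form a basis there, and the identity $\theta(H_{\mathrm{v}}(u))=\mathrm{v}$ together with linearity of $\mathrm{v}\mapsto H_{\mathrm{v}}$ gives a basis of $\mathcal{H}_u$; the splitting $T_u\mathcal{SO}^+(M)=\mathcal{V}_u\oplus\mathcal{H}_u$ then finishes the claim. The paper does not supply its own proof---it cites Bishop--Crittenden---and your argument is precisely the standard one found there, so there is nothing to compare beyond noting that it is complete and correct.
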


In terms of adapted coordinates $(x^\mu,e^\mu_I)$ of the frame bundle $\mathcal{F}(M)$, the vector fields $H_I$ and $\lambda (X_{IJ})$ can be written as
\begin{align*}
    H_I &= e^\alpha_I \frac{\partial}{\partial x^\alpha} - e^\mu_I e_J^\nu \Gamma^\alpha_{\mu \nu} \frac{\partial}{\partial e^\alpha_J}\,,\quad \lambda (X_{IJ})=(\eta_{JK}e^\mu_I-\eta_{IK}e^\mu_J)\frac{\partial}{\partial e^\mu_K} \ ,
\end{align*}
where $\Gamma^\alpha_{\mu \nu} $ denotes the Christoffel symbols (i.e. the connection coefficients of the Levi-Civita connection associated with $g$,  expressed in the coordinate basis $x^\mu$: $\nabla_{\partial_{x^\mu}}\partial_{x^\nu}=\Gamma^\alpha_{\mu \nu}\partial_{x^\alpha}$). The coordinate expression for $\lambda(X_{IJ})$ can be equivalently written as
\begin{align*}
    V_i = e^\mu_0 \frac{\partial}{\partial e^\mu_i} + e^\mu_i \frac{\partial}{\partial e^\mu_0} \ , \quad
    V_{ij} = e^\mu_i \frac{\partial}{\partial e^\mu_j} - e^\mu_j \frac{\partial}{\partial e^\mu_i} \ .
\end{align*}

\subsection{Fokker–Planck equation}

The geometric objects introduced in the previous section allowed J. Franchi and  Y. Le Jan in \cite{franchi_jan} to define an $\mathcal{SO}^+(M)$-valued Stratonovitch stochastic differential equation, 
\[
\mathrm{d}\Psi_s = \left(H_0(\Psi_s)\, \mathrm{d}s+\sigma \sum_{i=1}^n V_i(\Psi_s)\circ \mathrm{d}W^i_s\right) \ ,
\]
for $\Psi=(\Psi_s)\in \mathcal{SO}^+(M)$. Here $W=(W^i_s)$ is a $\mathbb{R}^n$-valued Brownian motion, and $\sigma>0$ is a constant that measures the strength of the diffusion process. This equation describes a natural geometric diffusion process in $\mathcal{SO}^{+}(M)$. The generator of this process is
\[
\mathcal{L}:=H_0 + \frac{\sigma^2}{2} \sum_{i=1}^n V^2_i \ ,
\]
whose adjoint $\mathcal{L}^*$ allows us to write the Fokker--Planck equation
\begin{align}
\label{fp_om1}
    \left( H_0 - \frac{\sigma^2}{2} \sum_{i=1}^n V^2_i\right) F = 0 \ .
\end{align}

In section \ref{Sec_Tangent}, we will justify equation \eqref{fp_om1} generalizing the line of reasoning followed by J. Elhers in reference \cite{ehlers} to derive the Vlasov (Liouville) equation.

\subsection{Symmetries} 
In numerous physically relevant contexts, the spacetime isometry group $\mathrm{Iso}(M,g)$ is not trivial and its Lie algebra is characterized through the (complete) Killing vector fields. In other words, the flow of a  Killing vector field $\xi \in \mathfrak{X}(M)$  is an isometry of $(M,g)$ and, hence
\begin{equation}
    \pounds_\xi g = 0.
\end{equation}
It is well known that the presence of Killing fields leads to conservation laws (see, for example, \cite{o1983semi}):  if $\xi$ is a Killing vector field,  the smooth functions 
\[C^\xi_I: \mathcal{F}(M)\rightarrow \mathbb{R}\,, \quad u\mapsto C^\xi_I(u)=g(u(e_I),\xi),\quad 0\leq I\leq n\,,\] satisfy 
\begin{equation}
    \pounds_{H_I} C^\xi_I =0\,. \label{coserv_SO}
\end{equation}
In particular $\pounds_{H_0} C^\xi_0=0$.

In addition to these conservation laws, as discussed in \cites{KN,mok1979complete}, given any vector field $\xi \in \mathfrak{X}(M)$, there exists a \textit{complete lift} (natural lift) of $\xi$ to $\mathcal{F}(M)$ which we will denote as $\xi^{\mathrm{c}} \in \mathfrak{X}(\mathcal{F}(M))$. The complete lift is characterized by the following three properties:
\begin{enumerate}
    \item $\xi^{\mathrm{c}}$ remains invariant under right translations, i.e., $\Phi^{A*}\xi^{\mathrm{c}} = \xi^{\mathrm{c}}$;
    \item The Lie derivative of the soldering form with respect to $\xi^{\mathrm{c}}$ vanishes, $\pounds_{\xi^{\mathrm{c}}} \theta = 0$;
    \item $\xi^{\mathrm{c}}$ is $\pi$-related to $\xi$, such that $T_u\pi (\xi^{\mathrm{c}}) = \xi(\pi(u))$ for all $u \in \mathcal{F}(M)$.
\end{enumerate}
In bundle coordinates, the complete lift is represented as
\begin{equation}
    \xi = \xi^\mu \frac{\partial}{\partial x^\mu} \mapsto \xi^{\mathrm{c}} = \xi^\mu \frac{\partial}{\partial x^\mu} + e^\nu_I \frac{\partial \xi^\mu}{\partial x^\nu} \frac{\partial}{\partial e^\mu_I},
    \label{eq:complete_lift}
\end{equation}
Moreover, if $\xi$ is a Killing vector field, then $\xi^{\mathrm{c}}$ tangentially aligns with the bundle $\mathcal{SO}^+(M)$. This follows from the fact that
\begin{align*}
    \pounds_{\xi^c} F_{IJ}=0 \ ,
\end{align*}
where $F_{IJ}:\mathcal{F}(M)\rightarrow \mathbb{R}^{(2+n)(1+n)/2},\,u\mapsto  F_{IJ}(u)=g(u(e_I),u(e_J))$.

The previous results allow us to prove the following proposition that relates the complete lift of Killing fields with the fundamental and basic vector fields:
\begin{proposition}
        The complete lift $\xi^{\mathrm{c}}$ of a Killing vector field $\xi$ to $\mathcal{SO}^+(M)$ commutes with both fundamental and basic vector fields, specifically with $V_i$, $V_{ij}$, and $H_I$.
    \label{lemma:commutation}
\end{proposition}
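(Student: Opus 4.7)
The plan is to treat the fundamental vector fields $V_i, V_{ij}$ and the basic vector fields $H_I$ separately, relying on the three defining properties of the complete lift listed just before the proposition.

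For the fundamental fields, I would proceed from the right-invariance property $\Phi^{A*}\xi^{\mathrm{c}} = \xi^{\mathrm{c}}$, which states that the flow of $\xi^{\mathrm{c}}$ commutes with every right translation $\Phi^A$. Since $V_i = \lambda(X_{0i})$ and $V_{ij} = \lambda(X_{ij})$ are the infinitesimal generators of the one-parameter subgroups $\Phi^{\exp(tX_{0i})}$ and $\Phi^{\exp(tX_{ij})}$, differentiating the right-invariance at $A = e$ in the directions $X_{0i}$ and $X_{ij}$ yields $\pounds_{V_i}\xi^{\mathrm{c}} = 0$ and $\pounds_{V_{ij}}\xi^{\mathrm{c}} = 0$, that is, $[\xi^{\mathrm{c}}, V_i] = [\xi^{\mathrm{c}}, V_{ij}] = 0$.

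For the basic fields, I would exploit Theorem \ref{TParallelization}: since $H_I, V_i, V_{ij}$ span $T\mathcal{SO}^+(M)$ pointwise, it suffices to show that $[\xi^{\mathrm{c}}, H_I]$ is horizontal and has zero soldering image. Horizontality follows from the fact that the flow of $\xi^{\mathrm{c}}$ is the natural lift to $\mathcal{SO}^+(M)$ of the isometry flow of $\xi$ on $(M,g)$, which preserves the Levi-Civita connection and consequently the horizontal distribution $\mathcal{H}$. Hence $\pounds_{\xi^{\mathrm{c}}}$ sends horizontal vector fields to horizontal vector fields, and in particular $[\xi^{\mathrm{c}}, H_I]$ is horizontal. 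For the soldering, I would apply the identity $\theta([X,Y]) = \pounds_X(\theta(Y)) - (\pounds_X\theta)(Y)$ with $X = \xi^{\mathrm{c}}$ and $Y = H_I$; property (2) of the complete lift gives $\pounds_{\xi^{\mathrm{c}}}\theta = 0$, while $\theta(H_I) = e_I$ is a constant $\mathbb{R}^{1+n}$-valued function, so both terms vanish and $\theta([\xi^{\mathrm{c}}, H_I]) = 0$. Since $\theta$ restricts to an isomorphism on the horizontal subspace, $[\xi^{\mathrm{c}}, H_I] = 0$.

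The main obstacle is the horizontality step, namely the assertion that the natural lift of the isometry flow of a Killing field preserves the Levi-Civita horizontal distribution $\mathcal{H}$. This is a classical result of Kobayashi--Nomizu that ultimately reflects the uniqueness of the Levi-Civita connection attached to $g$; equivalently, it can be phrased as $\pounds_{\xi^{\mathrm{c}}}\omega = 0$ for the Levi-Civita connection one-form $\omega$. A purely computational alternative is available: one can expand the three commutators in the bundle coordinates $(x^\mu, e^\mu_I)$ using the explicit formulas already displayed for $\xi^{\mathrm{c}}, H_I, V_i, V_{ij}$, and simplify with the Killing equation $\nabla_\mu\xi_\nu + \nabla_\nu\xi_\mu = 0$ together with $\nabla_\mu\nabla_\nu\xi^\alpha = R^\alpha{}_{\beta\mu\nu}\xi^\beta$. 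This direct route is considerably more tedious and less transparent than the intrinsic argument outlined above.
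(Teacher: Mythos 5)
Your proof is correct, but it takes a genuinely different route from the paper, which establishes all three commutation relations by brute-force expansion in the bundle coordinates $(x^\mu,e^\mu_I)$: for $V_i$ and $V_{ij}$ the coordinate terms cancel identically (no Killing condition needed), and for $H_I$ the commutator collapses to $e^\mu_I e^\nu_J(\pounds_\xi\Gamma)^\alpha_{\mu\nu}\,\partial/\partial e^\alpha_J$, which vanishes because $\pounds_\xi g=0$ implies $\pounds_\xi\Gamma=0$. Your intrinsic argument replaces these computations by structural facts: right-invariance $\Phi^{A*}\xi^{\mathrm{c}}=\xi^{\mathrm{c}}$ differentiated along the one-parameter subgroups $\exp(tX_{IJ})$ gives the vertical commutators (and, consistently with the paper's remark after the proof, this uses only that $\xi^{\mathrm{c}}$ is a complete lift, not that $\xi$ is Killing), while $\pounds_{\xi^{\mathrm{c}}}\theta=0$ together with the constancy of $\theta(H_I)=e_I$ kills the soldering component of $[\xi^{\mathrm{c}},H_I]$, and horizontality of that bracket is delegated to the Kobayashi--Nomizu fact that the natural lift of an affine (here isometric) flow preserves the Levi-Civita connection form, i.e.\ $\pounds_{\xi^{\mathrm{c}}}\omega=0$. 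That last step is the only place where your argument leans on an external result, and it is precisely the intrinsic counterpart of the paper's $\pounds_\xi\Gamma=0$; you could make it self-contained by the one-line computation $\omega([\xi^{\mathrm{c}},H_I])=\xi^{\mathrm{c}}(\omega(H_I))-(\pounds_{\xi^{\mathrm{c}}}\omega)(H_I)=0$ once $\pounds_{\xi^{\mathrm{c}}}\omega=0$ is granted. What your approach buys is coordinate-free transparency and a clean separation of which hypotheses are used where; what the paper's approach buys is self-containedness and explicit formulas that are reused later in the examples.
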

\begin{proof}
Given $\xi \in \mathfrak{X}(\mathcal{M})$ with $\pounds_\xi g = 0$, it suffices to verify that the commutator of $\xi^{\mathrm{c}}$ with each basis vector field $(V_i, V_{ij}, H_I)$ vanishes. Employing local coordinates,
\begin{align*}
        [V_i ,\xi^{\mathrm{c}}] 
        =& \left[ e^\mu_0 \frac{\partial}{\partial e^\mu_i} + e^\mu_i \frac{\partial}{\partial e^\mu_0}, \xi^\alpha \frac{\partial}{\partial x^\alpha} + e^\alpha_K \frac{\partial \xi^\beta}{\partial x^\alpha} \frac{\partial}{\partial e^\beta_K} \right] \\
        =& \frac{\partial \xi^\beta}{\partial x^\alpha} \left( \left[ e^\mu_0 \frac{\partial}{\partial e^\mu_i}, e^\alpha_K  \frac{\partial}{\partial e^\beta_K} \right] + \left[ e^\mu_i \frac{\partial}{\partial e^\mu_0}, e^\alpha_K  \frac{\partial}{\partial e^\beta_K}\right] \right) =0\\
        [V_{ij} ,\xi^{\mathrm{c}}] =& \left[e^\mu_j \frac{\partial}{\partial e^\mu_i} - e^\mu_i \frac{\partial}{\partial e^\mu_j}, \xi^\alpha \frac{\partial}{\partial x^\alpha} + e^\alpha_K \frac{\partial \xi^\beta}{\partial x^\alpha} \frac{\partial}{\partial e^\beta_K}  \right] \\
        =& \frac{\partial \xi^\beta}{\partial x^\alpha} \left(  \left[ e^\mu_j \frac{\partial}{\partial e^\mu_i}, e^\alpha_K \frac{\partial}{\partial e^\beta_K} \right] - \left[ e^\mu_i \frac{\partial}{\partial e^\mu_j}, e^\alpha_K  \frac{\partial}{\partial e^\beta_K}\right] \right)=0 \\
        [H_I, \xi^{\mathrm{c}}] =& \left[ e^\mu_I \frac{\partial}{\partial x^\mu} - e^\gamma_I e_J^\nu \Gamma^\mu_{\gamma \nu} \frac{\partial}{\partial e^\mu_J} , \xi^\alpha \frac{\partial}{\partial x^\alpha} + e^\alpha_K \frac{\partial \xi^\beta}{\partial x^\alpha} \frac{\partial}{\partial e^\beta_K} \right] \\
        =& e^\mu_I e^\nu_J \left( \frac{\partial^2 \xi}{\partial x^\mu \partial x^\nu} + \xi^\beta \frac{\partial \Gamma^\alpha_{\mu\nu}}{\partial x^\beta} + \frac{\partial \xi^\beta}{\partial x^\mu}  \Gamma^\alpha_{\beta\nu} + \frac{\partial \xi^\beta}{\partial x^\nu}  \Gamma^\alpha_{\mu\beta} -\frac{\partial \xi^\alpha}{\partial x^\beta}  \Gamma^\beta_{\mu\nu} \right) \frac{\partial}{\partial e^\alpha_J} \\
        =& e^\mu_I e^\nu_J  \left( \pounds_\xi \Gamma \right)^\alpha_{\mu\nu} \frac{\partial}{\partial e^\alpha_J} =0\,,
    \end{align*}
In the last line we have used that $\pounds_\xi g = 0$ and, thus, $\pounds_\xi \Gamma = 0$.
\end{proof}

Note that in the case of the vertical vectors $V_i, V_{ij}$, we have not used the fact that the vector field $\xi^{\mathrm{c}}$ is the lift of a Killing vector field. Hence, for vertical linear operators, the vanishing of the commutator holds for lifts of arbitrary vector fields on $M$ but in this case  $\xi^{\mathrm{c}}$ is not tangent to $\mathcal{SO}^+(M)$.
\bigskip 
\begin{corollary}
\label{corollary}
The complete lift $\xi^{\mathrm{c}}$ of $\xi$ commutes with the second-order differential operators $\lambda(X_{IJ})^2$; in particular, it commutes with $V^2_{i}$, $V^2_{ij}$, and with the Casimir $\mathcal{C}$.
\end{corollary}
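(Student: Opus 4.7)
The plan is to reduce the corollary to the previous proposition by exploiting the derivation property of the Lie bracket on products of vector fields. The proposition guarantees that $[\xi^{\mathrm{c}}, V] = 0$ for $V \in \{V_i, V_{ij}\}$, and since each generator $\lambda(X_{IJ})$ coincides (by the notation introduced before the proposition) with either some $V_i$ or some $V_{ij}$, the single commutator $[\xi^{\mathrm{c}}, \lambda(X_{IJ})] = 0$ is already in hand for every admissible pair $0 \le I < J \le n$.

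From this I would argue as follows. For any vector fields $X, Y$ acting as first-order differential operators on $C^\infty(\mathcal{SO}^+(M))$, one has the standard identity
\begin{equation*}
[X, Y^2] = [X,Y]\,Y + Y\,[X,Y],
\end{equation*}
which is a direct consequence of the fact that the commutator with $X$ is a derivation of the associative algebra of differential operators. Applying this with $X = \xi^{\mathrm{c}}$ and $Y = \lambda(X_{IJ})$ and using Proposition~\ref{lemma:commutation}, both terms on the right vanish, so $[\xi^{\mathrm{c}}, \lambda(X_{IJ})^2] = 0$. In particular this covers the special cases $V_i^2$ and $V_{ij}^2$.

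Finally, for the Casimir, I would simply write
\begin{equation*}
\mathcal{C} = \sum_{0 \le I < J \le n} \eta_{II}\eta_{JJ}\,\lambda(X_{IJ})^2,
\end{equation*}
and observe that since $[\xi^{\mathrm{c}}, \cdot]$ is $\mathbb{R}$-linear and annihilates each summand, it annihilates the whole sum, yielding $[\xi^{\mathrm{c}}, \mathcal{C}] = 0$.

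I do not anticipate a genuine obstacle here: all the substantive work (the coordinate computation showing that $\xi^{\mathrm{c}}$ commutes with the fundamental fields, and the use of $\pounds_\xi \Gamma = 0$) has already been carried out in the proposition. The only thing to be careful about is that the derivation identity for brackets is applied to operators acting on smooth functions rather than to vector fields \emph{per se}, but this is standard and introduces no sign or ordering subtleties because $[\xi^{\mathrm{c}}, \lambda(X_{IJ})]$ is literally zero (not merely a central element).
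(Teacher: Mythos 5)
Your proposal is correct and follows essentially the same route as the paper: both reduce the statement to Proposition~\ref{lemma:commutation} by noting that each $\lambda(X_{IJ})$ is one of the $V_i$ or $V_{ij}$, and then pass the vanishing first-order commutator through the two factors of $\lambda(X_{IJ})^2$ (the paper phrases this as $\pounds_{\xi^{\mathrm{c}}}\pounds_{X_{IJ}}\pounds_{X_{IJ}}f=\pounds_{X_{IJ}}\pounds_{X_{IJ}}\pounds_{\xi^{\mathrm{c}}}f$, which is exactly your derivation identity in the case where the bracket is literally zero). The extension to $\mathcal{C}$ by linearity is likewise implicit in the paper's argument, so there is nothing to add.
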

\begin{proof} 
Using the fact that 
\[\pounds_{\xi^{\mathrm{c}}}\pounds_{\lambda(X_{IJ})}-\pounds_{\lambda(X_{IJ})}\pounds_{\xi^{\mathrm{c}}}=\pounds_{[\xi^{\mathrm{c}},\lambda(X_{IJ})]}=0,\]
and taking into account that the second-order operator $\lambda(X_{_{IJ}})^2$ can be written in the form $\lambda(X_{IJ})^2 f=\pounds_{X_{IJ}}(\pounds_{X_{IJ}}f)$, we have
\[
\pounds_{\xi^{\mathrm{c}}}\pounds_{X_{IJ}}(\pounds_{X_{IJ}}f)=\pounds_{X_{IJ}}(\pounds_{X_{IJ}}(\pounds_{\xi^{\mathrm{c}}}f))\,.
\]
\end{proof}

With this in mind, we can finally delineate the conditions that symmetric solutions must fulfill to satisfy  a Fokker--Planck type equation.

\begin{proposition}\label{fpsymm}
The symmetric solutions for the FP diffusion equation in $\mathcal{SO}^+(M)$ are those functions simultaneously satisfying
\begin{align}\label{fp_om_symmetry}
     \left( H_0 - \frac{\sigma^2}{2}\sum_{i=1}^n V^2_i \right) F = 0  \, , \quad         \pounds_{\xi^{\mathrm{c}}} F = 0\,,
\end{align}
for every Killing field $\xi$  of $(M,g)$.
\end{proposition}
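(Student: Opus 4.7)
The proof will be short and conceptual, essentially packaging the commutation results already established. The plan is to verify that the Fokker--Planck operator $\mathcal{D} := H_0 - \frac{\sigma^2}{2}\sum_{i=1}^n V_i^2$ and the Lie derivative operators $\pounds_{\xi^{\mathrm{c}}}$ (for $\xi$ Killing) generate a consistent system, so that the two conditions in (\ref{fp_om_symmetry}) can be imposed simultaneously without over-determining the problem.

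First I would invoke the preceding Proposition to get $[\xi^{\mathrm{c}}, H_0] = 0$, which at the level of Lie derivatives on $C^\infty(\mathcal{SO}^+(M))$ gives the operator identity $\pounds_{\xi^{\mathrm{c}}}\, H_0 = H_0\, \pounds_{\xi^{\mathrm{c}}}$. Next, by Corollary \ref{corollary}, each $V_i^2$ commutes with $\pounds_{\xi^{\mathrm{c}}}$, and hence so does the finite sum $\sum_{i=1}^n V_i^2$. Combining these two facts by linearity yields
\[
\bigl[\,\pounds_{\xi^{\mathrm{c}}},\ \mathcal{D}\,\bigr] \;=\; \Bigl[\pounds_{\xi^{\mathrm{c}}},\ H_0 - \tfrac{\sigma^2}{2}\sum_{i=1}^n V_i^2\Bigr] \;=\; 0
\]
for every Killing vector field $\xi$ of $(M,g)$.

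From this commutation I would draw the two conclusions that make the proposition meaningful. On the one hand, the space of solutions of $\mathcal{D}F = 0$ is preserved by the flow of each lifted Killing field: if $\mathcal{D}F = 0$ then $\mathcal{D}(\pounds_{\xi^{\mathrm{c}}} F) = \pounds_{\xi^{\mathrm{c}}}(\mathcal{D}F) = 0$. On the other hand, the subspace $\{F : \pounds_{\xi^{\mathrm{c}}}F = 0 \text{ for all Killing } \xi\}$ of isometry-invariant functions is preserved by $\mathcal{D}$, since $\pounds_{\xi^{\mathrm{c}}}(\mathcal{D}F) = \mathcal{D}(\pounds_{\xi^{\mathrm{c}}}F) = 0$. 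Therefore looking for $F$ that satisfies both conditions in (\ref{fp_om_symmetry}) is a well-posed reduction of the FP problem to the space of symmetric functions on $\mathcal{SO}^+(M)$, which is precisely the statement of the proposition.

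There is no real obstacle in this argument: everything rests on the commutation identities already proven. The only point requiring mild care is to note that although the isometry algebra may be infinite-dimensional in pathological cases, the compatibility condition is linear in $\xi$, so imposing it for a generating set of Killing fields is sufficient; this makes the symmetric reduction effective in concrete examples such as those treated later in Section \ref{sec:examples}.
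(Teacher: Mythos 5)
Your argument is correct and follows essentially the same route as the paper: the paper justifies the proposition via the remark immediately following it, which likewise combines Proposition \ref{lemma:commutation} and Corollary \ref{corollary} to conclude that $\pounds_{\xi^{\mathrm{c}}}$ commutes with the operator $H_0 - \frac{\sigma^2}{2}\sum_{i=1}^n V_i^2$, so the symmetry condition and the FP equation are mutually consistent. Your additional observation that both the solution space of the FP equation and the space of invariant functions are preserved is a slightly more explicit packaging of the same point.
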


\begin{remark}
    The vanishing of the commutator of $\xi^{\mathrm{c}}$ with the basic and fundamental vector fields is crucial for the consistency of the symmetric equations because
\[
\left( H_0 - \frac{\sigma^2}{2}\sum_{i=1}^n V^2_i \right) F = 0 \Rightarrow  \pounds_{\xi^{\mathrm{c}}} \left( H_0 - \frac{\sigma^2}{2}\sum_{i=1}^n V^2_i \right) F = 0,
\]
but this last equation is trivially satisfied when $ \pounds_{\xi^{\mathrm{c}}} F=0$ due to commutativity mentioned above. 
\end{remark}

\begin{remark}  The conclusions of the previous corollary are also satisfied for the equation
\[
\left( H_0 + \frac{\sigma^2}{2} \mathcal{C} \right) F = 0 \,.
\]
As we will see in Lemma \ref{fpsymm_UM} of Section \ref{sec_22}, this equation 
(which is not a diffusion equation) induces the \textit{same} FP equation in the unit-tangent bundle.
\end{remark}

\subsection{Double and timelike unit tangent bundles in Lorentzian manifolds}{\label{sec_22}}

The bundle $\mathcal{SO}^+(M)$ is very useful from the point of view of both writing stochastic differential equations and working with symmetries as we have seen so far. However, from a physical point of view, the frames have too many degrees of freedom, and one would like to work in the configuration space. This is the approach directly taken by \cite{ehlers}, developing the equations with the structures of the physically relevant configurations, corresponding to the (unit mass) observer subbundle $UM$ of the tangent bundle $TM$. The introduction of $UM$ by J.A. Thorpe, as referenced in \cite{Th1} and further discussed in \cite{Beem}, underpins the study of space-time singularities by providing a mathematically rigorous platform. The purpose of this subsection is to present these natural structures and to establish a connection with the structures introduced previously.  

In the following we will need to consider the double tangent bundle, denoted as $TTM$, of a semi-Riemannian manifold $(M,g)$. This bundle can be decomposed into vertical and horizontal sub-bundles, represented as:
\begin{equation}
    TTM = \mathcal{V}TM \oplus \mathcal{H}TM.
\end{equation}
For any tangent vector $v \in TM$, the vertical subspace $\mathcal{V}_vTM$ and the horizontal subspace $\mathcal{H}_vTM$ at $v$ are naturally isomorphic to the tangent space at the base point $\mathrm{p}(v)$ of $v$, with $\mathrm{p}: TM \to M$ being the canonical projection.

Given a vector $X \in T_{\mathrm{p}(v)}M$, one can select a curve $\gamma(t)$ in $M$ where $\gamma'(0) = X$. Utilizing the Levi-Civita connection, we construct a curve $c$ in $TM$ satisfying $\gamma = \mathrm{p} \circ c$ and $c(0) = v$, with the condition $\nabla_{\gamma'}c = 0$. Through this construction, $v$ is mapped to $c'(0)$, a process referred to as the horizontal lift of $X$. In local coordinates, this horizontal lift is expressed as:
\[
T_{\mathrm{p}(v)}M\ni X= X^\mu \frac{\partial}{\partial x^\mu} \mapsto X^{\mathcal{H}} =X^\mu \frac{\partial}{\partial x^\mu}-\Gamma^\mu_{\alpha\beta}v^\alpha X^\beta \frac{\partial}{\partial v^\mu}\in \mathcal{H}_vM\,.
\]
where $\Gamma^\mu_{\alpha\beta}$ are the Christoffel symbols of the Levi-Civita connection. Therefore, the identification of vectors in the horizontal subspace $\mathcal{H}_vTM$ with vectors in $T_{\mathrm{p}(v)}M$ is facilitated by the tangent map of the projection $\mathrm{p}: TM \to M$, restricted to $\mathcal{H}TM$. 

On the other hand, concerning the vertical component of a vector in the double tangent bundle $TTM$, for any vector $v$ in the tangent bundle $TM$, the space $T_{\mathrm{p}(v)}M$ is a vector space. This fact establishes a natural isomorphism $\imath_v: T_v T_{\mathrm{p}(v)}M = \mathcal{V}_vTM \rightarrow T_{\mathrm{p}(v)}M$. In local coordinates, this isomorphism takes the form
\[
\imath_v (V^\mu\frac{\partial}{\partial v^\mu}) = V^\mu\frac{\partial}{\partial x^\mu}\,.
\]
The reverse process is facilitated by the so-called vertical lift, which for any $X \in T_{\mathrm{p}(v)}M$, is given by
\[
T_{\mathrm{p}(v)}M \ni X = X^\mu \frac{\partial}{\partial x^\mu} \mapsto X^{\mathcal{V}} = X^\mu \frac{\partial}{\partial v^\mu} \in T_v T_{\mathrm{p}(v)}M\,.
\]
To decompose any vector $W \in TTM$ into its vertical and horizontal components, \[W = W^{\mathrm{ver}} + W^{\mathrm{hor}},\] an additional construct, the connection map $K$ associated with the Levi-Civita connection, is introduced, where $K: TTM \rightarrow \mathcal{V}TM$. The decomposition in coordinates is described as
\[
W^{\mathrm{ver}} = K_v(W)\,, \quad W^{\mathrm{hor}} = W - K_v(W)\,,
\]
and specifically,
\[
K_v\left(X^\mu \frac{\partial}{\partial x^\mu} + V^\mu \frac{\partial}{\partial v^\mu}\right) = \left(V^\mu + \Gamma^\mu_{\alpha\beta} v^\alpha X^\beta\right) \frac{\partial}{\partial v^\mu}\,, \quad W \in T_v TM\,.
\]

These structures enable the definition of a natural horizontal vector field $L  \in \mathfrak{X}(TM)$, mapping $v \mapsto L (v) \in T_v TM$ and determined by the requirements
\[
T\mathrm{p}(L (v)) = v\,, \quad K_v(L (v)) = 0\,, \quad \forall v \in TM\,,
\]
or, equivalently, $L (v)^{\mathrm{hor}} = v^{\mathcal{H}}$, $L (v)^{\mathrm{ver}} = 0$. This vector field, $L $, known as the \textit{geodesic spray}, is characterized in local coordinates by
\[
L(v)  = v^\mu \frac{\partial}{\partial x^\mu}(v) - \Gamma^\mu_{\alpha\beta} v^\alpha v^\beta \frac{\partial}{\partial v^\mu}(v) \,.
\]
In the following, we will also use the canonically vertical vector field $A$ in $TM$ (the \textit{Liouville vector field}) that is,  the generator of the dilations \[a^t:TM\rightarrow TM\,,\quad v\mapsto a^t(v)=t v\,, \quad t\in \mathbb{R}\setminus \{0\}.\]
In bundle coordinates
\[
A(v)=v^\mu\frac{\partial}{\partial v^\mu}(v)\,.
\]

These canonical isomorphisms can be also used to define the Sasaki metric $\textsf{g}$ on $TM$ associated with the metric $g$ on $M$. The Sasaki metric is formulated by designating the vertical bundle $\mathcal{V}_vTM$ and the horizontal bundle $\mathcal{H}_vTM$ as orthogonal components. Mathematically, this is expressed as
\begin{equation}
    \textsf{g}(V,W) = g(T_v\mathrm{p}(V), T_v\mathrm{p}(W)) + g(\imath_vK_v(V), \imath_vK_v(W))\,, \quad V,W \in T_vTM\,.
    \label{sasaki1}
\end{equation}
In local coordinates, the Sasaki metric can be written in detail  as
\begin{equation}
\textsf{g} = g_{\mu\nu} \mathrm{d}x^\mu \otimes\mathrm{d}x^\nu + g_{\mu\nu}(\mathrm{d}v^\mu + \Gamma^\mu_{\alpha\beta} v^\alpha \mathrm{d}x^\beta)\otimes(\mathrm{d}v^\mu + \Gamma^\mu_{\gamma\delta} v^\gamma \mathrm{d}x^\delta)\,.
    \label{sasaki2}
\end{equation}
It is important to note that if the signature of $g$ is $(1,n)$, then, correspondingly, the signature of $\textsf{g}$ is $(2,2n)$, reflecting the doubled dimensionality and the preservation of the manifold's metric properties within its tangent bundle.

As we will see, the pullback of $\textsf{g}$ to the \textit{unit future observer bundle}, 
\begin{align*}
    UM = U^{1}M = \{ v \in TM \mid g(v,v) = -1, v \text{ is future directed} \}\subset TM\,,
\end{align*}
is important for the analysis of the FP equation. $UM$ is a codimension one submanifold of $TM$ and the canonical projection $\tau: UM \rightarrow M$ equips it as a subbundle of $TM$. 

One notable property of $UM$ is that any tangent vector within $\mathcal{H}TM$ also belongs to the tangent space of $UM$ ($\mathcal{H}_vTM \subset T_vUM$). This inclusion is substantiated by the behavior of the function $E: TM \rightarrow \mathbb{R}$, defined by $v \mapsto E(v) = g(v,v)/2$, which maintains a constant value for all vectors in $\mathcal{H}TM$, as indicated by $\pounds_W E= \mathrm{d}E(W) = 0$. This feature ensures that the geodesic flow, generated by the vector field $L $, preserves the structure of $UM$, hence maintaining its integrity under dynamical evolution.

The Sasaki metric $\textsf{g}$ on $TM$ induces a Lorentzian metric $\mathrm{g}$ of signature $(1,2n)$ on $UM$. This metric equips the $n$-dimensional submanifolds $\mathbb{H}_v = \tau^{-1}(\tau(v)) \subset UM$ with a Riemannian (hyperbolic) metric, where each $\mathbb{H}_v$ represents a fiber of the projection $\tau: UM \rightarrow M$ through $v$. The Riemannian metric, characterized by its positive signature and hyperbolic nature, lays the groundwork for defining the vertical Laplacian $\Delta^{\!\mathrm{ver}}$, a second-order differential operator acting on $C^\infty(UM)$. The vertical Laplacian is expressed as
\[
(\Delta^{\!\mathrm{ver}}f)(v) = (\Delta^{\mathbb{H}_v} f\upharpoonright\mathbb{H}_v)(v)\,,
\]
where $f\upharpoonright\mathbb{H}_v$ denotes the restriction of $f$ to $\mathbb{H}_v$, and $\Delta^{\mathbb{H}_v}$ is the Laplace-Beltrami operator associated with the metric induced on $\mathbb{H}_v$ by $(UM, \mathrm{g})$. The local coordinate representation of $\Delta^{\!\mathrm{ver}}$ highlights its dependency on the geometric structure of $UM$ and the dynamics of space-time, as given by
\[
\Delta^{\!\mathrm{ver}} f = \left( n \, v^i \frac{\partial}{\partial v^i} + \left( v^i v^j + g^{ij} \right) \frac{\partial^2}{\partial v^i \partial v^j} \right) f \,.
\]

We already have all the ingredients to establish the form of the Fokker--Planck equation on $UM$ and its relationship with the one introduced in \(\mathcal{SO}^+(M)\). This is achieved through the following:
 
\begin{lemma}
\label{fpsymm_UM}
Let $\pi_0 : \mathcal{SO}^+(M)\rightarrow UM\,,   u \mapsto \ u(e_0)$ and $f \in \mathcal{C}^\infty\left( UM\right)$, then 
\begin{itemize}
\item[(1)] $T\pi_0(V_{ij}) =0$ and $ V_{ij} \left( \pi_0^*f \right) = 0$,
\item[(2)] $T\pi_0(H_0) =L $ and $H_0 \left( \pi_0^* f\right) = \pi_0^* \left( L  f \right)$,
 \item[(3)]  $\displaystyle  \mathcal{C} \left( \pi_0^*f \right) = -\sum_{i=1}^n V_i^2  \left( \pi_0^*f \right)= -\pi_0^*\left( \Delta^{\!\mathrm{ver}} f \right) $
\end{itemize}
where $\Delta^{\!\mathrm{ver}}$ is the vertical Laplacian and $L $ is the geodesic spray of $UM$. 
\end{lemma}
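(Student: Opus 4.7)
My plan is to prove the three parts in order, exploiting the coordinate expressions for $H_0$, $V_i$, $V_{ij}$ given at the end of Section~\ref{section:om}, together with the orthonormality relation for $g$-orthonormal frames. The key geometric observation is that the map $\pi_0(u) = u(e_0)$ only sees the first column $e^\mu_0$ of the frame, so derivations acting on the other columns die under $T\pi_0$. Throughout, we use that in bundle coordinates adapted to $\pi_0$, a function of the form $\pi_0^* f$ depends only on $x^\mu$ and $e^\mu_0$ (which plays the role of $v^\mu$ on $UM$).

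For part (1), I would argue conceptually first: $V_{ij} = \lambda(X_{ij})$ generates the right action of the rotation subgroup $SO(n) \subset SO^+(1,n)$ stabilizing $e_0$, and the fibers of $\pi_0$ are exactly the orbits of this subgroup. Hence $V_{ij}$ is vertical for $\pi_0$, which gives $T\pi_0(V_{ij}) = 0$ and therefore $V_{ij}(\pi_0^* f) = 0$. Alternatively, the coordinate expression $V_{ij} = e^\mu_i \partial/\partial e^\mu_j - e^\mu_j \partial/\partial e^\mu_i$ contains no derivatives with respect to $x^\mu$ or $e^\mu_0$, so it annihilates $\pi_0^* f$ directly. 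For part (2), I would apply $T\pi_0$ termwise to the coordinate expression of $H_0$: the horizontal part $e^\mu_0 \partial/\partial x^\mu$ goes to $v^\mu \partial/\partial x^\mu$, and in the vertical part $-e^\mu_0 e^\nu_J \Gamma^\alpha_{\mu\nu}\partial/\partial e^\alpha_J$ only the $J=0$ summand survives because $T\pi_0(\partial/\partial e^\alpha_J) = \delta^0_J\,\partial/\partial v^\alpha$, yielding $-v^\mu v^\nu \Gamma^\alpha_{\mu\nu}\partial/\partial v^\alpha$. The sum is precisely $L$, from which $H_0(\pi_0^* f) = \pi_0^*(Lf)$ follows by naturality of pullback.

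For part (3), the first equality $\mathcal{C}(\pi_0^* f) = -\sum_i V_i^2(\pi_0^* f)$ is immediate from the decomposition $\mathcal{C} = \sum_{i<j} V_{ij}^2 - \sum_i V_i^2$ together with $V_{ij}(\pi_0^* f) = 0$, which iterates to $V_{ij}^2(\pi_0^* f) = 0$. The substantive content is $\sum_i V_i^2(\pi_0^* f) = \pi_0^*(\Delta^{\!\mathrm{ver}} f)$. Using $V_i = e^\mu_0\,\partial/\partial e^\mu_i + e^\mu_i\,\partial/\partial e^\mu_0$ and the fact that $\pi_0^* f$ is independent of $e^\mu_i$ for $i\geq 1$, I compute
\[
V_i(\pi_0^* f) = e^\mu_i\, \pi_0^*\!\left(\partial f/\partial v^\mu\right).
\]
Applying $V_i$ again and using $V_i(e^\mu_i) = e^\mu_0$ (no sum on $i$) plus the same principle for the second factor gives
\[
V_i^2(\pi_0^* f) = e^\mu_0\,\pi_0^*(\partial_\mu f) + e^\mu_i e^\nu_i\,\pi_0^*(\partial_\mu\partial_\nu f),
\]
where $\partial_\mu := \partial/\partial v^\mu$. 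Summing over $i = 1,\dots,n$ and invoking the completeness relation for orthonormal frames, $g^{\mu\nu} = -e^\mu_0 e^\nu_0 + \sum_i e^\mu_i e^\nu_i$, converts $\sum_i e^\mu_i e^\nu_i$ into $g^{\mu\nu} + v^\mu v^\nu$, producing exactly the coordinate expression of $\Delta^{\!\mathrm{ver}} f$ pulled back by $\pi_0$.

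The only step I expect to need care is the second-derivative bookkeeping in part (3): one has to remember that $V_i$ now acts through both the $e^\mu_0$-coefficient and the composition with $\pi_0^*$, and that the sum over $i$ (but not the spacetime indices $\mu,\nu$) converts orthonormal-frame sums to the inverse metric. Once the Leibniz rule is applied cleanly and the completeness relation invoked, the match with the coordinate formula for $\Delta^{\!\mathrm{ver}}$ is mechanical. Parts (1) and (2) are essentially bookkeeping once the coordinate expressions are unpacked.
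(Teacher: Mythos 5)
Your proof is correct and follows essentially the same route as the paper's: parts (1) and (2) by pushing forward the coordinate expressions of $V_{ij}$ and $H_0$ under $\pi_0$, and part (3) by a local-coordinate computation of $\sum_i V_i^2(\pi_0^*f)$ in which the $e^\mu_i$-derivatives drop out and the completeness relation $\sum_i e^\mu_i e^\nu_i = g^{\mu\nu}+v^\mu v^\nu$ produces the vertical Laplacian. The only differences are cosmetic: you apply $V_i$ iteratively to the pullback rather than expanding $\sum_i V_i^2$ on a general function first, and you make explicit two facts the paper uses silently (that $T\pi_0(V_{ij})=0$ because $V_{ij}$ generates the $SO(n)$-stabilizer of $e_0$, and the completeness relation).
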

\begin{proof}
Properties (1) and (2) follow almost directly: 
\begin{align*}
  V_{ij}(\pi_0^*f)&=\mathrm{d}(\pi_0^*f)(V_{ij})=\pi_0^*(\mathrm{d}f)(V_{ij})=(\mathrm{d}f)(T\pi_0V_{ij})\circ \pi_0=0\,,\\ 
  H_{0}(\pi_0^*f)&=\mathrm{d}(\pi_0^*f)(H_0)=\pi_0^*(\mathrm{d}f)(H_0)=(\mathrm{d}f)(T\pi_0H_0)\circ \pi_0=
(\mathrm{d}F)(L )\circ \pi_0=\pi_0^*(L f)\,,
\end{align*}
where we have used that  $T\pi_0V_{ij}=0$ and $T\pi_0H_0=L$.

Property (3) can be derived in local coordinates for any function $F\in C^\infty(\mathcal{SO}^+(M))$, 
\begin{align*}
     \sum_{i=1}^n V_i^2 F= \sum_{i=1}^n \left( e^j_i e^k_i \frac{\partial}{\partial e_0^j} \frac{\partial}{\partial e_0^k} +e^j_i \frac{\partial}{\partial e_i^j} + n e_0^j \frac{\partial}{\partial e_0^j} + 2 e_0^j e_i^k \frac{\partial}{\partial e_0^k} \frac{\partial}{\partial e_i^j} + e_0^j e_0^k \frac{\partial}{\partial e_i^j} \frac{\partial}{\partial e_i^k} \right) F\ . 
\end{align*}
Hence, when $F=\pi_0^*f$, the derivatives with respect to $e_i^j$ vanish, and, identifying $e_0^i = v^i$, 
\begin{align*}
   \mathcal{C} \left( \pi_0^*f \right) = -\sum_{i=1}^n V^2_i \left( \pi_0^*f \right)= -\left(  n  \  v^i \frac{\partial}{\partial v^i} + \left( v^i v^j + g^{ij} \right) \frac{\partial}{\partial v^i} \frac{\partial}{\partial v^j}  \right) f .  
\end{align*}
Therefore
\begin{align*}
    \mathcal{C} \left( \pi_0^*f \right) = -\pi_0^*\left( \Delta^{\textrm{ver}} f \right)  \ .
\end{align*}
\end{proof}

Similar to the framework in $\mathcal{SO}^+(M)$, $TM$ supports the \textit{complete lift} of vector fields from $M$. A differential 1-form $\omega \in \Omega^1(M)$, may be viewed as a scalar function $F^\omega: TM \rightarrow \mathbb{R}$ with $v \mapsto F^\omega(v) = \omega(v)$. The complete lift $\xi^{\mathrm{ct}} \in \mathfrak{X}(TM)$ of a vector field $\xi \in \mathfrak{X}(M)$ is uniquely defined by the property
\[
\pounds_{\xi^{\mathrm{ct}}} F^\omega = F^{\pounds_\xi \omega}, \quad \forall \omega \in \Omega^1(M)\,.
\]
In local bundle coordinates $(x^\mu, v^\mu)$ of $TM$, this lift is expressed as
\[
\xi = \xi^\mu \frac{\partial}{\partial x^\mu} \mapsto \xi^{\mathrm{ct}} = \xi^\mu \frac{\partial}{\partial x^\mu} + v^\nu \frac{\partial \xi^\mu}{\partial x^\nu} \frac{\partial}{\partial v^\mu}\,.
\]
When $\xi$ is a Killing field, $\xi^{\mathrm{ct}}$ is tangent to $UM$ and corresponds to the complete lift $\xi^{\mathrm{c}}$ to $\mathcal{SO}^+(M)$ via
\[
T\pi_0(\xi^{\mathrm{c}}) = \xi^{\mathrm{ct}}\,.
\]
This relationship establishes an equivalence with the symmetric equations in $\mathcal{SO}^+(M)$, articulated through the following corollary:
\begin{corollary}\label{corolarioFP}
The diffusion equation in $UM$, represented as the Fokker--Planck equation 
\[
    \left( L  - \frac{\sigma^2}{2}\Delta^{\!\mathrm{ver}} \right) f = 0\,,
\]
is the equation obtained from the diffusion equation \eqref{fp_om1} on $\mathcal{SO}^+(M)$ just by imposing that $F=\pi_0^*f$. Similarly, 
in the presence of spacetime symmetries,  the equations
\[
 \left( L  - \frac{\sigma^2}{2}\Delta^{\!\mathrm{ver}} \right) f = 0 \ , \quad \pounds_{\xi^{\mathrm{ct}}} f = 0 \ ,
\]
are derived from equation \eqref{fp_om_symmetry} on Proposition \ref{fpsymm}.
\end{corollary}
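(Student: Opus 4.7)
The plan is to reduce the corollary to a direct application of Lemma \ref{fpsymm_UM} together with the fact that the complete lifts $\xi^{\mathrm{c}}$ and $\xi^{\mathrm{ct}}$ are $\pi_0$-related. Concretely, I would start from the non-symmetric Fokker--Planck equation \eqref{fp_om1} and substitute the ansatz $F = \pi_0^*f$. By Lemma \ref{fpsymm_UM}(2), $H_0(\pi_0^*f) = \pi_0^*(Lf)$, and by Lemma \ref{fpsymm_UM}(3), $\sum_{i=1}^n V_i^2(\pi_0^*f) = \pi_0^*(\Delta^{\!\mathrm{ver}} f)$. Hence
\begin{equation*}
\left(H_0 - \frac{\sigma^2}{2}\sum_{i=1}^n V_i^2\right)(\pi_0^*f) = \pi_0^*\!\left(\left(L - \frac{\sigma^2}{2}\Delta^{\!\mathrm{ver}}\right)f\right).
\end{equation*}

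For the first part of the statement, I would note that $\pi_0 : \mathcal{SO}^+(M) \to UM$ is surjective (every future-directed unit timelike vector can be completed to a direct pseudo-orthonormal frame), so the pullback $\pi_0^*$ is injective on $C^\infty(UM)$. Thus the identity above shows that the ansatz $F=\pi_0^*f$ turns equation \eqref{fp_om1} into the Fokker--Planck equation
\begin{equation*}
\left(L - \frac{\sigma^2}{2}\Delta^{\!\mathrm{ver}}\right)f = 0
\end{equation*}
on $UM$, and conversely.

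For the symmetric part, I would use the relation $T\pi_0(\xi^{\mathrm{c}}) = \xi^{\mathrm{ct}}$ stated just before the corollary. Since $\xi^{\mathrm{c}}$ and $\xi^{\mathrm{ct}}$ are $\pi_0$-related,
\begin{equation*}
\pounds_{\xi^{\mathrm{c}}}(\pi_0^*f) = \pi_0^*(\pounds_{\xi^{\mathrm{ct}}} f),
\end{equation*}
so the symmetry condition $\pounds_{\xi^{\mathrm{c}}} F = 0$ in Proposition \ref{fpsymm} is, under the ansatz $F = \pi_0^*f$ and by the injectivity of $\pi_0^*$, equivalent to $\pounds_{\xi^{\mathrm{ct}}} f = 0$. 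Combining the two pieces yields the claimed reduction of the symmetric system on $\mathcal{SO}^+(M)$ to its counterpart on $UM$.

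There is no real obstacle here: all nontrivial content has already been packaged in Lemma \ref{fpsymm_UM} and in the identification $T\pi_0(\xi^{\mathrm{c}}) = \xi^{\mathrm{ct}}$. The only point to make explicit is the injectivity of $\pi_0^*$, which guarantees that the pullback of an equation on $UM$ holds on $\mathcal{SO}^+(M)$ if and only if the original equation holds on $UM$; this is immediate from surjectivity of $\pi_0$.
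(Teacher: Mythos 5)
Your proposal is correct and follows essentially the route the paper intends: the corollary is stated without a separate proof precisely because it is a direct consequence of Lemma \ref{fpsymm_UM}(2)--(3) and the relation $T\pi_0(\xi^{\mathrm{c}})=\xi^{\mathrm{ct}}$, which is exactly what you use. Your explicit appeal to the surjectivity of $\pi_0$ (hence injectivity of $\pi_0^*$) to get the equivalence in both directions is a point the paper leaves implicit, and it is a worthwhile clarification rather than a deviation.
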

 
Finally, notice that if $\xi$ is a Killing vector field, the function \[C^\xi:TM\rightarrow \mathbb{R}\,,\quad v\mapsto C^\xi(v)=g(v,\xi) \ ,\] satisfies
\begin{equation}
    \pounds_L C^\xi =0\,. \label{conservado}
\end{equation}
These conserved quantities are analogous to \eqref{coserv_SO} for $I=0$.

%%%%%%%%%%%%%%%%%%%%%%%%%%%%%%%%%%%%%%%%%%%%%%%%%%%%%%%%%%%%%%%%%%%%%%%%%%%%%%
%%%%  Relativistic kinetic theory on tangent bundle                      %%%%%
%%%%%%%%%%%%%%%%%%%%%%%%%%%%%%%%%%%%%%%%%%%%%%%%%%%%%%%%%%%%%%%%%%%%%%%%%%%%%%
\section{Relativistic diffusion on the tangent bundle}{\label{Sec_Tangent}}

The development of relativistic kinetic theory, pivotal for understanding relativistic thermodynamics, is grounded in the foundational work by J. Ehlers \cite{ehlers}  in the early 1970s and further explored by others such as Sarbach and Franchi \cites{sarbach2013relativistic, franchi2012hyperbolic, sarbach2014geometry}. This theory extends the classical kinetic theory of gases to a relativistic framework, thereby providing a more comprehensive model of matter that incorporates its particle nature. Central to this theory is the concept of the \textit{one-particle distribution function}, which, analogous to its non-relativistic counterpart, quantifies the expected particle density within a defined volume in the phase space of a single particle.

The objective of this section is to generalize Ehlers' results to allow for the presence of diffusion as well as to try to justify, in an approach closer to physical applications, the FP equation presented in Corollary \ref{corolarioFP}. In the previous sections we focused on the $m=1$ case but in this section we will allow for any value $m>0$ for the mass of the particles.

\subsection{The observer bundle}

To define the \textit{one-particle phase space} for massive particles with arbitrary masses in a $(1+n)$-dimensional, oriented and time-oriented space-time $(M,g)$, we define it as the smooth $2(n+1)$-dimensional submanifold (open subset) of the tangent bundle $TM$:
\[
\mathcal{P} := \left\{ v \in TM : g(v,v) < 0, \, v \text{ is future directed} \right\},
\]
where $\mathcal{P}$ admits a foliation into mass-shell bundles,
\[
\mathcal{P} = \bigcup_{m > 0} U^{m}M,
\]
with 
\[
U^{m}M := \left\{ v \in TM : g(v,v) = -m^2, \, v \text{ is future directed} \right\},
\]
being a $(2n+1)$-dimensional smooth fiber bundle over $M$ for any $m>0$. In particular, the case $m=1$ is just the unit observer bundle presented in the previous section. Each fiber,
\[
U^{m}_xM = \left\{ v \in T_xM : g_x(v,v) = -m^2, \, v \text{ is future directed} \right\},
\]
is isometric to the $n$-dimensional hyperbolic Riemannian space $\mathbb{H}^n(m)$ of constant negative curvature $-1/m^2$, corresponding to the $n$-dimensional (future-pointing) $m$-mass hyperboloid in Minkowski space-time. When convenient, we will write $\mathbb{H}_x=U^{m}_xM$. 

It is straightforward to prove that the Liouville vector field $A$ and the geodesic spray $L$ are orthogonal and tangent  to $U^{m}M$, respectively.  Moreover, the geodesic spray $L$ of $TM$ induces fields on both the open submanifold $\mathcal{P}\subset TM$ and in $U^{m}M$. The notation $L _m$ is used when referring to the geodesic spray as a vector field on $U^{m}M$.

The one-particle phase space, along with the mass-shell bundles and their fibers, inherit significant geometric properties from the tangent bundle $(TM,\textsf{g})$, where $\textsf{g}$ denotes the Sasaki metric defined in equation \eqref{sasaki1}. As regular, orientable submanifolds of $TM$, the pull-back of the Sasaki metric $\textsf{g}$ via the canonical inclusion furnishes $\mathcal{P}$, $U^{m}M$, and $U^{m}_xM$ with semi-Riemannian metrics of signatures $(2,2n)$, $(1,2n)$, and $(0,n)$ respectively. The volume forms corresponding to these metrics are denoted as
\[
\mathrm{vol}_{\textsf{g}}, \quad \mathrm{vol}_{m}, \quad \text{and} \quad \sigma_x^{m}\,.
\]

The volume $\sigma_x^{m}$ appears in the definition of many physical quantities, so it is convenient to keep the following result in mind: 
\begin{lemma}
The volume form $\sigma_x^{m}$, evaluated at $v\in U^{m}_xM$, satisfies 
\[
\sigma_x^{m}(v) = -m \frac{w \lrcorner \mathrm{vol}_{g_x}(v)}{g_x(w,v)}\,,
\]
where $\mathrm{vol}_{g_x}$ is the volume form induced by $g_x$ in $T_xM$ (and $\mathrm{vol}_{g_x}(v)$ the volume of $T_vT_xM$),  $w\in T_vT_xM$ is any vector such that $g_x(w,v)\neq 0$. In particular, using local coordinates and choosing $w = \partial/\partial v^0$, we have
\[
\sigma_x^{m}(v) = -m \frac{\sqrt{-\det g_x}}{g_{0\mu}(x) v^\mu} \,\mathrm{d} v^1\wedge \dots\wedge \mathrm{d} v^n \,,
\]
where $\det g_x = \det (g_{\mu\nu}(x))$ and $v^0$ is given in terms of $(v^1,\dots,v^n)$ by $g_{\mu\nu}(x)v^\mu v^\nu=-m^2$.
\end{lemma}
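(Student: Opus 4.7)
The plan is to view $U_x^{m}M$ as a level hypersurface of the function $h(v) := \tfrac{1}{2}(g_x(v,v)+m^{2})$ inside the flat pseudo-Riemannian vector space $(T_xM, g_x)$ and to apply the standard normal-contraction formula for a volume induced on a level set. First I would identify the Sasaki metric $\textsf{g}$ restricted to the fibre $T_xM\subset TM$ with $g_x$ itself: for $v\in T_xM$ the tangent space $T_vT_xM$ is purely vertical, the connection map $K_v$ acts as the identity there, and the vertical isomorphism $\imath_v$ reduces \eqref{sasaki1} to $\textsf{g}\!\upharpoonright\! T_xM = g_{\mu\nu}(x)\,\mathrm{d}v^{\mu}\otimes\mathrm{d}v^{\nu}$. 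Consequently $\mathrm{vol}_{g_x}$ in the statement is genuinely the pseudo-Riemannian volume of $(T_xM,g_x)$, and $\sigma_x^{m}$ is the induced Riemannian volume of the hypersurface $U_x^{m}M\subset T_xM$.

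The key geometric input is then the formula for the induced volume on a level set. Since $g_x$ is constant on $T_xM$, one has $\mathrm{d}h_v(X) = g_x(v,X)$ for every $X\in T_vT_xM\cong T_xM$, so the $g_x$-gradient of $h$ equals $v$ itself. On $U_x^{m}M$ we have $g_x(v,v) = -m^{2}$, meaning $v$ is a timelike normal of ``length'' $m$, and therefore
\[
\sigma_x^{m} = \tfrac{1}{m}\, v \,\lrcorner\, \mathrm{vol}_{g_x}\big|_{U_x^{m}M}\,,
\]
with the overall sign fixed by the chosen orientation of $(T_xM,g_x)$ together with the future time-orientation of $v$.

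To pass from the normal $v$ to an arbitrary transverse $w\in T_vT_xM$ with $g_x(w,v)\neq 0$, I decompose $w = \alpha v + w_T$ with $w_T\in T_vU_x^{m}M$. From $g_x(w,v) = \alpha\, g_x(v,v) = -\alpha m^{2}$ I read off $\alpha = -g_x(w,v)/m^{2}$. Evaluated on any $n$ vectors tangent to $U_x^{m}M$, the interior products split as $w\,\lrcorner\,\mathrm{vol}_{g_x} = \alpha\, v\,\lrcorner\,\mathrm{vol}_{g_x} + w_T\,\lrcorner\,\mathrm{vol}_{g_x}$, and the second term vanishes on $U_x^{m}M$ because $w_T$ together with those $n$ tangent vectors constitute $n+1$ vectors in the $n$-dimensional space $T_vU_x^{m}M$, hence are linearly dependent. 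Combining with the previous step yields
\[
\sigma_x^{m}(v) = \tfrac{1}{m\alpha}\, w\,\lrcorner\,\mathrm{vol}_{g_x}(v)\big|_{U_x^{m}M} = -m\,\frac{w\,\lrcorner\,\mathrm{vol}_{g_x}(v)}{g_x(w,v)}\,,
\]
which is the general assertion.

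The coordinate expression then follows by specialising to $w = \partial/\partial v^{0}$. In bundle coordinates one has $\mathrm{vol}_{g_x} = \sqrt{-\det g_x}\,\mathrm{d}v^{0}\wedge\mathrm{d}v^{1}\wedge\cdots\wedge\mathrm{d}v^{n}$ and $g_x(w,v) = g_{0\mu}(x)v^{\mu}$; contracting gives $w\,\lrcorner\,\mathrm{vol}_{g_x} = \sqrt{-\det g_x}\,\mathrm{d}v^{1}\wedge\cdots\wedge\mathrm{d}v^{n}$, and substituting into the displayed identity reproduces the stated formula, with $v^{0}$ determined implicitly by $g_{\mu\nu}(x)v^{\mu}v^{\nu} = -m^{2}$. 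The only delicate point, and the step I would verify most carefully, is the overall sign of the normal-contraction formula, which depends on the orientation and time-orientation conventions adopted for $(M,g)$ and is what fixes the coefficient at $-m$ rather than $+m$.
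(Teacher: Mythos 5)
Your argument is correct and is essentially the paper's proof: both rest on the observation that $w\lrcorner\mathrm{vol}_{g_x}$ restricted to $T_vU^{m}_xM$ depends only on the component of $w$ along the normal $v$, combined with the normal-contraction formula for the unit normal $m^{-1}v$ (the paper's normalized Liouville vector $m^{-1}A$); you simply run these two steps in the opposite order, first fixing $\sigma_x^m$ via the unit normal and then extending to general $w$, whereas the paper first shows the quotient is independent of $w$ and then evaluates at $w=m^{-1}v$. The orientation/sign convention you flag as delicate is asserted rather than derived in the paper as well.
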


\begin{proof}
Notice that $T_xM$ is a vector space so $T_vT_xM$ can be identified with $T_xM$. Given $v\in U^{m}_{x}M$, any basis $(b_1,\dots,b_n)$ of $T_vU^{m}_{x}M = v^{\perp_{g_x}}\subset T_vT_xM$, 
\[\mathrm{vol}_{g_x}(w,b_1,\dots,b_n) \ , \]
vanishes when $w\in\mathrm{span}(b_1,\dots,b_n)$. Hence, there exists a volume-form $\sigma_x^m$ on $U^{m}_xM$ such that
\[\mathrm{vol}_{g_x}(w,b_1,\dots,b_n) = -m^{-1} g_{x}(w,v) \sigma_x^m(b_1,\dots,b_n)\,,\quad v\in T_vU^{m}_xM\,.\]
The form
\[
\sigma_x^m(b_1,\dots,b_n) = -m \frac{\mathrm{vol}_{g_x}(w,b_1,\dots,b_n)}{g_{x}(w,v)} \ ,
\]
is independent of the choice of $w$ (as long as $g_x(w,v)\neq 0$). By setting $w=m^{-1}v$  it is clear that $\sigma_x^m=\mathrm{vol}_{x}^m$ (recall that the normalized Liouville vector $m^{-1}A$ is the unitary normal  to $U^{m}_xM$).  On the other hand, by choosing $w=\partial/\partial v^0$
\[\mathrm{vol}_{x}^m=\sigma_x^m = -m\frac{\partial_{v^0}\lrcorner\mathrm{vol}_{g_x}}{g_{0\mu}(x)v^\mu}=-m \frac{\sqrt{-\det g_x}}{g_{0\mu}(x) v^\mu} \mathrm{d} v^1\wedge \dots\wedge \mathrm{d} v^n=m \frac{\sqrt{-\det g_x}}{|v_0|}  \mathrm{d} v^1\wedge \dots\wedge \mathrm{d} v^n\,,\]
where $v_0:=g_{0\mu}(x)v^\mu$.
\end{proof}

Although it is not strictly necessary, in many cases it is convenient to use the spacetime metric to bring the natural structures of the contangent bundle $T^*M$ to the tangent bundle $TM$. This is so since the geodesic spray and the volume structures derived from the Sasaki metric can be described in terms of the energy, tautological form and the symplectic structure of $T^*M$ (and also because the charged particle treatment is more natural in $T^*M$). We will follow the approach developed by M. Berger in \cite{berger1965lectures}. Taking advantage of the natural isomorphism between $TM$ and $T^*M$ induced by the metric $g$, analogous structures are derived from the cotangent bundle $T^*M$. The tautological one-form of $T^*M$ corresponds to a one-form $\alpha \in \Omega^1(TM)$, with
\[
\alpha(V) = g(v, Tp(V)), \quad V \in T_vTM, \quad p: TM \rightarrow M\,.
\]
Hence, $\mathrm{d}\alpha$ introduces a symplectic form on $TM$ (non-degenerate and closed), essential for the formulation of Hamiltonian dynamics.  The Liouville vector field $L $ and the tautological one-form $\alpha$ satisfy 
\[
L \lrcorner \mathrm{d}\alpha = -\mathrm{d} E,
\]
where the energy function $E: TM \rightarrow \mathbb{R}$ is defined by:
\[
E(v)=\frac{1}{2} g(v,v) \quad \text{or equivalently} \quad \alpha(L )=2E.
\]

\begin{proposition}
Let $\mathrm{vol}_{\textsf{g}}$ denote the volume form on the tangent bundle $TM$ induced by the Sasaki metric $\textsf{g}$, and let $\alpha$ represent the tautological one-form on $TM$ associated with the metric $g$ on the base manifold $M$. Then, the volume form $\mathrm{vol}_{\textsf{g}}$ and the volume form $\mathrm{vol}_m$ on the mass-shell submanifold $U^{m}M$ are given by:
\[
\mathrm{vol}_{\textsf{g}} = c  (\mathrm{d}\alpha)^{1+n},
\quad 
\mathrm{vol}_m= c_m \alpha_m \wedge (\mathrm{d}\alpha_m)^{n} \,,
\]
where $\alpha_m$ is the pullback  of $\alpha$ to $U^{m}M$ and the constants $c$ and $c_m$  are given by
\[
c=\frac{(-1)^{\binom{1+n}{2}}}{(1+n)!}\,,\quad c_m=\frac{(-1)^{\binom{1+n}{2}}}{n!m} \,.
\]
Furthermore, the volume form $\mathrm{vol}_m$ can be disintegrated along the fibers $U^{m}_xM$ for each $x \in M$. Denoting by $\sigma_x^m$ the canonical volume measure on the hyperbolic space $U_x^{m}M$ within the Minkowski space $(T_xM, g_x)$, we obtain the integral formula:
\[
\int_{U^{m}M}  f \, \mathrm{vol}_m =\int_{M} \left(\int_{U^{m}_xM} (f\upharpoonright U^{m}_{x}M)\, \sigma_x^{m}\right) \mathrm{vol}_{g},
\]
where $\mathrm{vol}_{g}$ is the Riemannian volume form on $M$ induced by $g$.
\end{proposition}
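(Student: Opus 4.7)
The plan is to derive all three statements from a single local-coordinate computation of $(d\alpha)^{1+n}$, combined with the standard identity $A\lrcorner d\alpha = \alpha$ expressing that the Liouville vector field $A$ is the radial vector for the symplectic form $d\alpha$. Parts (1) and (2) will then follow by elementary algebra; part (3) will be a Fubini-type argument comparing the local expression of $\mathrm{vol}_m$ in adapted coordinates with the formula for $\sigma_x^m$ supplied by the preceding lemma.

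For part (1), I would start in bundle coordinates, where $\alpha = g_{\mu\nu}(x)v^\mu\,\mathrm{d}x^\nu$ and hence
\[d\alpha = \partial_\lambda g_{\mu\nu}\,v^\mu\,\mathrm{d}x^\lambda\wedge\mathrm{d}x^\nu + g_{\mu\nu}\,\mathrm{d}v^\mu\wedge\mathrm{d}x^\nu.\]
In the $(1+n)$-th wedge power, the first summand only contributes $\mathrm{d}x$-factors, so a dimension count forces all such contributions to vanish in top degree; thus $(d\alpha)^{1+n} = (g_{\mu\nu}\,\mathrm{d}v^\mu\wedge\mathrm{d}x^\nu)^{1+n}$, which expands to $(1+n)!\det(g_{\mu\nu})\,\mathrm{d}v^0\wedge\mathrm{d}x^0\wedge\cdots\wedge\mathrm{d}v^n\wedge\mathrm{d}x^n$. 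Reordering this wedge into the canonical form $\mathrm{d}x^0\wedge\cdots\wedge\mathrm{d}x^n\wedge\mathrm{d}v^0\wedge\cdots\wedge\mathrm{d}v^n$ accrues exactly $(-1)^{\binom{1+n}{2}}$ from counting transpositions. I would then match against $\mathrm{vol}_\textsf{g} = |\det g|\,\mathrm{d}x^0\wedge\cdots\wedge\mathrm{d}v^n$, using that in the horizontal/vertical frame adapted to $\nabla$ the Sasaki metric is block-diagonal with two copies of $g_{\mu\nu}$, so $|\det\textsf{g}| = (\det g)^2$; this fixes the constant $c$.

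For part (2), the identity $A\lrcorner d\alpha = \alpha$ (immediate in coordinates) yields $A\lrcorner(d\alpha)^{1+n} = (1+n)\,\alpha\wedge(d\alpha)^n$. I would then verify that the Sasaki gradient of $E = g(v,v)/2$ equals $A$ itself, so $A$ is normal to $U^{m}M = \{E = -m^2/2\}$ in $(TM,\textsf{g})$ with $\textsf{g}(A,A) = -m^2$ on the mass shell. The standard normal-contraction formula for induced hypersurface volumes then gives $\mathrm{vol}_m = m^{-1}\iota^*(A\lrcorner\mathrm{vol}_\textsf{g})$ (up to orientation). Substituting from part (1) produces $\mathrm{vol}_m = (c(1+n)/m)\,\alpha_m\wedge(d\alpha_m)^n$, and $c(1+n)/m$ is precisely $c_m$.

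For part (3), I would work in adapted coordinates $(x^\mu,v^1,\dots,v^n)$ on $U^{m}M$, with $v^0$ determined by the mass-shell constraint $g_{\mu\nu}v^\mu v^\nu = -m^2$; differentiating that constraint expresses $\mathrm{d}v^0$ in terms of the remaining differentials $\mathrm{d}v^i$ and $\mathrm{d}x^\lambda$. Substituting into $\iota^*(A\lrcorner\mathrm{vol}_\textsf{g})$ yields a local expression for $\mathrm{vol}_m$ proportional to $(m|\det g|/|g_{0\mu}v^\mu|)\,\mathrm{d}x^0\wedge\cdots\wedge\mathrm{d}x^n\wedge\mathrm{d}v^1\wedge\cdots\wedge\mathrm{d}v^n$, which, by the preceding lemma and $\mathrm{vol}_g = \sqrt{-\det g}\,\mathrm{d}x^0\wedge\cdots\wedge\mathrm{d}x^n$, agrees on each chart with $\tau^*\mathrm{vol}_g\wedge\sigma_x^m$. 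Fubini's theorem then delivers the stated disintegration.

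The principal difficulty I anticipate will be sign bookkeeping: the permutation sign $(-1)^{\binom{1+n}{2}}$ in part (1), the orientation convention for the timelike-normal contraction in part (2), and the substitution of $\mathrm{d}v^0$ in part (3) all introduce signs which must conspire with the chosen orientation of $U^{m}M$. I would sanity-check the constants against the Minkowski case $(\mathbb{R}^{1,n},\eta)$ in low dimensions before trusting the general formulas.
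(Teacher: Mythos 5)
Your proposal is correct and follows essentially the same route as the paper: the key step in both is to write $\mathrm{vol}_{\textsf{g}} = c\,(\mathrm{d}\alpha)^{1+n}$, contract with the unit normal $m^{-1}A$ using $A\lrcorner\mathrm{d}\alpha=\alpha$, and pull back to $U^{m}M$. You additionally supply the explicit coordinate computation for the first identity (which the paper delegates to the Riemannian argument in Besse) and for the disintegration formula (which the paper leaves implicit in the preceding lemma on $\sigma_x^m$); just recheck the sign bookkeeping in part (1), since the reordering of $\mathrm{d}v^0\wedge\mathrm{d}x^0\wedge\cdots\wedge\mathrm{d}v^n\wedge\mathrm{d}x^n$ into $\mathrm{d}x^0\wedge\cdots\wedge\mathrm{d}v^n$ costs $(-1)^{\binom{2+n}{2}}$ rather than $(-1)^{\binom{1+n}{2}}$, and this must be combined with $\det g<0$ and the chosen orientation of $TM$ to land on the stated constant.
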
{\label{teorema_integralesUM}}

\begin{proof} The same steps followed in the Riemannian case discussed in \cite{besse2012manifolds} allow us to show that
$\mathrm{vol}_{\textsf{g}} = c  (\mathrm{d}\alpha)^{1+n}$. To compute $\mathrm{vol}_m$ notice that $m^{-1}A$ is the unit normal to $U^{m}M$ and
\begin{equation}
    m^{-1}A\lrcorner\mathrm{vol}_{\textsf{g}}= m^{-1}c  A \lrcorner (\mathrm{d}\alpha)^{1+n}=
   \frac{c(1+n)}{m}  \alpha\wedge (\mathrm{d}\alpha)^{n} \,, \label{mA}
\end{equation}
where we have used that 
\[
A\lrcorner \mathrm{d}\alpha =\alpha\,.
\]
Therefore, the result follows by pulling-back \eqref{mA}  to $U^{m}M$
\[
\mathrm{vol}_m=\frac{c(1+n)}{m}  \alpha_m\wedge (\mathrm{d}\alpha_m)^{n}\,.
\]

\end{proof}

Notice that, in addition to $\mathrm{vol}_m$, the manifold $U^{m}M$ is also equipped with a canonically defined  $2n$-form 
\begin{align*}
    \omega_{m} := L_{m} \lrcorner \mathrm{vol}_{m} \ .
\end{align*}

The previous discussion leads to the following result.

\begin{corollary}
For a $(1+n)$-dimensional space-time $(M,g)$, the  $(2n+1)$-dimensional future observer bundle $U^{m}M$ is equipped with a volume form  and a $2n$-form,
\[
\mathrm{vol}_m = c_m  \alpha_m \wedge (\mathrm{d}\alpha_m)^n \quad\textrm{ and } \quad  \omega_m=L _m \lrcorner \mathrm{vol}_m= -m^2c_m  (\mathrm{d}\alpha_m)^n,
\]
that satisfy
\[\pounds_{L _m}\mathrm{vol}_m =0,\quad \pounds_{L _m} \omega_m =0,\quad  L _m \lrcorner\omega_m=0,\quad \mathrm{d} \omega_m=0.\]
\end{corollary}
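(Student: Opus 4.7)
The plan is to use standard identities involving the interior product, Cartan's magic formula, and the nilpotency of both $\mathrm{d}$ and $\iota_X$. The crucial input is the pair of identities for the canonical structures on $TM$ that have been recorded just above the statement, namely $\alpha(L)=2E=g(v,v)$ and $L\lrcorner \mathrm{d}\alpha = -\mathrm{d}E$. Both descend to $U^{m}M$ via the canonical inclusion: since $E$ equals the constant $-m^2/2$ on $U^{m}M$, I obtain $\alpha_m(L_m)=-m^2$ and $L_m\lrcorner\, \mathrm{d}\alpha_m=0$.

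First I would establish the formula $\omega_m = -m^2c_m(\mathrm{d}\alpha_m)^n$. Using the preceding proposition, $\mathrm{vol}_m=c_m\,\alpha_m\wedge(\mathrm{d}\alpha_m)^n$, and the Leibniz rule for $\iota_{L_m}$,
\[
L_m\lrcorner \mathrm{vol}_m = c_m\bigl(\alpha_m(L_m)\bigr)(\mathrm{d}\alpha_m)^n - c_m\,\alpha_m\wedge \bigl(L_m\lrcorner (\mathrm{d}\alpha_m)^n\bigr).
\]
For the second term, $L_m\lrcorner (\mathrm{d}\alpha_m)^n = n\,(L_m\lrcorner\, \mathrm{d}\alpha_m)\wedge(\mathrm{d}\alpha_m)^{n-1}=0$ by the identity above. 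Since $\alpha_m(L_m)=-m^2$, the first term yields the desired expression for $\omega_m$.

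The remaining four properties then follow immediately. The identity $L_m\lrcorner\omega_m = L_m\lrcorner(L_m\lrcorner\mathrm{vol}_m)=0$ holds because $\iota_X^2=0$. The closedness $\mathrm{d}\omega_m = -m^2c_m\,\mathrm{d}(\mathrm{d}\alpha_m)^n=0$ follows from $\mathrm{d}^2=0$. Applying Cartan's magic formula and noting that $\mathrm{vol}_m$ is a top-degree form on the $(2n+1)$-manifold $U^{m}M$,
\[
\pounds_{L_m}\mathrm{vol}_m = \mathrm{d}(L_m\lrcorner\mathrm{vol}_m)+L_m\lrcorner\, \mathrm{d}\mathrm{vol}_m = \mathrm{d}\omega_m + 0 = 0.
\]
Similarly, $\pounds_{L_m}\omega_m = \mathrm{d}(L_m\lrcorner\omega_m)+L_m\lrcorner\, \mathrm{d}\omega_m = 0+0 = 0$.

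There is no real obstacle: the only point that requires care is checking that the identities $\alpha(L)=2E$ and $L\lrcorner\mathrm{d}\alpha=-\mathrm{d}E$, stated for $TM$, pull back correctly to $U^{m}M$ under the inclusion; this is immediate because $\alpha_m$ and $\mathrm{d}\alpha_m$ are by definition the pullbacks of $\alpha$ and $\mathrm{d}\alpha$, $L_m$ is the restriction of $L$ (which is tangent to $U^{m}M$), and $E$ restricts to a constant on $U^{m}M$.
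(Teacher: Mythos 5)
Your proof is correct, and it uses the same essential ingredients as the paper ($\alpha(L)=2E$, $L\lrcorner \mathrm{d}\alpha=-\mathrm{d}E$, the constancy of $E$ on $U^{m}M$, and Cartan's formula), but it organizes them differently in a way worth noting. The paper first observes $L_m\lrcorner\omega_m=0$ from $\iota_X^2=0$, then obtains $\mathrm{d}\omega_m=\pounds_{L_m}\mathrm{vol}_m=0$ in one chain by expanding $\pounds_{L_m}(\alpha_m\wedge(\mathrm{d}\alpha_m)^n)$ and invoking $\pounds_{L_m}\alpha_m=\mathrm{d}E\upharpoonright U^{m}M=0$; it never explicitly derives the closed-form expression $\omega_m=-m^2c_m(\mathrm{d}\alpha_m)^n$ that appears in the statement. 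You instead derive that expression first, via the interior-product Leibniz rule together with $\alpha_m(L_m)=-m^2$ and $L_m\lrcorner\mathrm{d}\alpha_m=0$, and then closedness of $\omega_m$ falls out of $\mathrm{d}^2=0$ rather than of $\pounds_{L_m}\alpha_m=0$. Your route has the advantage of actually verifying the displayed formula for $\omega_m$, which the paper asserts without proof; the paper's route is marginally shorter because $\mathrm{d}\omega_m=0$ and $\pounds_{L_m}\mathrm{vol}_m=0$ are established simultaneously. Both arguments are complete and the sign and degree bookkeeping in your interior-product computation is right.
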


\begin{proof}
The assertion $L_m \lrcorner\omega_m = L_m \lrcorner (L_m\lrcorner\mathrm{vol}_m) = 0$ directly follows from the definition. The remaining properties can be derived from:
\[
\mathrm{d} \omega_m = \mathrm{d}(L_m\lrcorner \mathrm{vol}_m) = \pounds_{L _m}\mathrm{vol}_m = c_{m}
\pounds_{L _m}(\alpha_m \wedge (\mathrm{d}\alpha_m)^{1+n}) = 0,
\]
where the last equality is obtained by making use of $\pounds_{L_m} \alpha_m = 0$, justified by:
\[
L \lrcorner \alpha = 2E \, \textrm{  and  } \, L \lrcorner \mathrm{d}\alpha = -\mathrm{d}E \Rightarrow \pounds_{L } \alpha = \mathrm{d}E.
\]
Given that $L$ is tangent to $U^{m}M$, allowing us to pull back $\pounds_{L} \alpha = \mathrm{d}E$ to $U^{m}M$,  the equation $\pounds_{L_m} \alpha_m = 0$ holds because $E$ is constant on $U^{m}M$. 
\end{proof}

\subsection{Diffusion Equation on the Observer Bundle}

In this section, we examine the diffusion equation within the observer bundle $U^{m}M$, focusing on the integration of the $(2n)$-form $\omega_m$ over hypersurfaces and its implications for volume conservation under the dynamics defined by $L _m$, the Liouville vector field . When $L _m$ is tangent to $\mathcal{S}$, the condition $L _m \lrcorner \omega_m=0$ ensures
\[
\int_{\mathcal{S}} \omega_m = 0 \ .
\]
Conversely, if $L _m$ intersects $\mathcal{S}$ transversely, $\omega_m$ acts as a volume form on $\mathcal{S}$. The closure of $\omega_m$, denoted by $\mathrm{d}\omega_m=0$, guarantees a consistent volume assignment across any two diffeomorphically related hypersurfaces, $\mathcal{S}_1$ and $\mathcal{S}_2$, via the flow induced by $L _m$. To establish this, consider a tube $\mathcal{T}$ formed by Lie dragging a compact hypersurface $\mathcal{S}_1$ along $L _m$ to a second boundary $\mathcal{S}_2$. The invariance of volume under such transformations follows from:
\begin{align}
    0 = \int_{\mathcal{T}} \mathrm{d} \omega_m = \int_{\mathcal{S}_2} \omega_m - \int_{\mathcal{S}_1} \omega_m,
\end{align}
indicating that the volume enclosed by any hypersurface $\mathcal{S}$ shaped through this process remains invariant.

The hypersurfaces $\mathcal{S}$ (transverses to $L_m$) may be endowed with a  volume form 
\[m^{-1} f \omega_m \ ,\]for any smooth function $f>0$: the so called \textit{one-particle distribution function} on $U^mM$. The normalization factor $m^{-1}$ comes from the fact that \[\textsf{g}(L,L)\upharpoonright U^mM=-m^2 \ . \] Physically, the quantity  
\[N(\mathcal{S}):=m^{-1}\int_{\mathcal{S}} f \omega_m \ , \]
provides the average of  particle trajectories that pass through $\mathcal{S}$ (that is, trajectories with tangent vectors belonging to  $\mathcal{S}$). 

We observe that
\begin{align}\label{tube}
    \mathrm{d}(f \omega_m) = \pounds_{L_m} (f \, \text{vol}_{m}) = (\pounds_{L_m}f)  \text{vol}_{m},
\end{align}
indicating that, irrespective of $f$, the following integral relation holds:
\begin{align}
    \int_{\mathcal{T}} \pounds_{L_m} f \, \text{vol}_{m} = \int_{\mathcal{S}_2} f \omega_m  - \int_{\mathcal{S}_1} f \omega_m = m \Big( N(\mathcal{S}_2)-  N(\mathcal{S}_1)\Big).   
\end{align}
Enforcing $\pounds_{L_m} f = 0$ guaranteed that the volume form $f \omega_m$ consistently assigns identical volumes to all hypersurfaces $\mathcal{S}$ within the tube $\mathcal{T}$:
\begin{equation}
    \int_{\mathcal{S}_2} f \omega_m  = \int_{\mathcal{S}_1} f \omega_m.\label{conservation}
\end{equation}
Therefore, there is no net change in the average number of particle trajectories passing through the surfaces $\mathcal{S}_1$ and $\mathcal{S}_2$. This condition, encapsulated by the equation 
\[\pounds_{L_m} f = 0,\] is recognized as the Vlasov (or Liouville) equation. However, adherence to the conservation law \eqref{conservation}  does not strictly necessitate the Vlasov equation; it suffices for $f$ to satisfy:
\begin{equation}
     \int_{\mathcal{T}} \pounds_{L_m} f \, \text{vol}_{m} = 0,
\end{equation}
leveraging theorem \ref{teorema_integralesUM} to express this as:
\[
 \int_{\mathcal{T}} (\pounds_{L_m} f) \, \text{vol}_{m} = \int_{x \in M} \left( \int_{U^{m}_xM} (\pounds_{L_m} f\upharpoonright{U^{m}_xM}) \, \sigma^m_x \right) \text{vol}_{g}=0,
\]
where requiring that  the integral over $U^{m}_xM$ vanishes  for every $x \in M$ guarantees the conservation law \eqref{conservation}, i.e.
\[
\int_{U^{m}_xM} (\pounds_{L_m} f\upharpoonright{U^{m}_xM}) \, \sigma^m_x =0  \Rightarrow \int_{\mathcal{S}_2} f \omega_m  = \int_{\mathcal{S}_1} f \omega_m\,.
\]
This might be obtained for instance with, 
\begin{equation}
    \pounds_{L_m} f = \frac{\sigma^2}{2} \Delta^{\mathrm{ver}}_m f, \label{FPecuacion}
\end{equation}
where the operator $\Delta^{\mathrm{ver}}_m$ acts on $f\upharpoonright U^{m}_xM$ similarly to the Laplace-Beltrami operator on $U^{m}_xM$, ensuring the integral's nullity via integration by parts. This formulation aligns with the Fokker--Planck equation, underscoring geometric diffusion processes as developed by Franchi and Jan \cite{franchi_jan}, and introduced by Calogero \cite{calogero2011kinetic}.

\subsection{Current densities and conservation laws}

In practice, the hypersurfaces $\mathcal{S}$ appearing in equation \eqref{conservation} that we will use in the following have the form 
\begin{align}
    \mathcal{S}_\Sigma =\pi^{-1}_m(\Sigma) = \{ v\in U^{m}M \ : \ \pi_m(v)\in \Sigma \} \ ,
\end{align}
and are constructed from spacelike hypersurfaces $\Sigma\subset M$. Notice that, if $f$ decays rapidly enough, equation \eqref{conservation} holds even when $\Sigma$ is a (non-compact) Cauchy surface. We can split  the integrals in \eqref{tube} over $ \mathcal{S}_\Sigma$ in a similar way as the one in theorem \ref{teorema_integralesUM} to get 
\begin{align}
\label{mean_particle}
    \int_{ \mathcal{S}_\Sigma} f\omega_m = \int_{x\in \Sigma} \left( \int_{v\in U^{m}_xM} g(\mathsf{n}_x,v) f(v) \ \sigma^m_x(v)  \right) \text{vol}_\Sigma(x) \ ,
\end{align}
where $\mathsf{n}$ is the future-directed normal to $\Sigma$. Motivated by the previous expression, associated  with $f$, we define the \textit{particle current density} $\mathsf{J}$, i.e. the vector field $\mathsf{J}\in \mathfrak{X}(M)$ [$M\ni x\mapsto \mathsf{J}_x\in T_xM$]
\begin{align}
   \mathsf{J}(\alpha) = m^{-1}\int_{U^{m}_x} F^\alpha  f  \sigma^m_x = m^{-1} \int_{v\in U^{m}_x} \alpha(v) \ f(v) \ \sigma^m_x(v) \ , \quad \forall \alpha \in T_x^*M\,,
\end{align}
where $v\mapsto F^\alpha(v)=\alpha(v)$ is a smooth function on $U^m_xM$ for any $\alpha\in T_x^*M$. 
In this way, we can consider the map 
\[
\Sigma\ni x\mapsto g_x(\mathsf{J},\mathsf{n})=\mathsf{J}(\mathsf{n}^\flat)(x) =\int_{\in U^{m}_xM} F^{\mathsf{n}^\flat_x}  f  \sigma^m_x =  \int_{v\in U^{m}_x} g(\mathsf{n}_x,v) \ f(v) \ \sigma^m_x(v) \ ,
\]
where $\mathsf{n}^\flat$ is the one form $\mathsf{n}^\flat(\cdot)=g(\mathsf{n},\cdot)$. Therefore, if $\Sigma_1$, $\Sigma_2$ are Cauchy hypersurfaces, then denote $(\mathcal{S}_{\Sigma_1}, \mathcal{S}_{\Sigma_2})_m$ the tube $\mathcal{T}$ generated by $\mathcal{S}_{\Sigma_1}$ to $ \mathcal{S}_{\Sigma_2}$ by the flow of $L_m$. Then the following holds
\begin{align}
\label{J_conserved}
 0=\int_{(\mathcal{S}_{\Sigma_1}, \mathcal{S}_{\Sigma_2})_m} \pounds_{L_m} f \ \text{vol}_{m} = \int_{\Sigma_2} g(\mathsf{J}, \mathsf{n}_2) \ \text{vol}_{\Sigma_2} -\int_{\Sigma_1} g(\mathsf{J}, \mathsf{n}_1) \  \text{vol}_{\Sigma_1} \ ,
\end{align}
where, $\mathsf{n}_1$, $\mathsf{n}_2$ are the future-directed normals of $\Sigma_1$, $\Sigma_2$. In the case $\Sigma_1$, $\Sigma_2$ are non-compact, $f$ should decay rapidly enough.

In terms of local coordinates 
\[
\mathsf{J}(x)=m^{-1}\left( \int_{v\in U^{m}_x} v^\mu \ f(v) \ \sigma^m_x(v)\right) \frac{\partial}{\partial x^\mu}(x) \,.
\]

\begin{theorem} If $f$ satisfies the FP equation $\left(L_m -\frac{\sigma^2}{2}\Delta^{\mathrm{ver}}_m\right) f =0$ then 
$\mathrm{div} \,\mathsf{J} =0$.
\end{theorem}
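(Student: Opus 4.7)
The strategy is to combine equation \eqref{J_conserved}, which already translates the difference of fluxes of $\mathsf{J}$ across two Cauchy hypersurfaces into an integral of $\pounds_{L_m} f$ over the tube, with the Fokker--Planck equation and the disintegration formula of Proposition \ref{teorema_integralesUM}. The key geometric input is that, on each fiber, the vertical Laplacian is the Laplace--Beltrami operator of a Riemannian (hyperbolic) manifold, so its integral against the fiber volume vanishes under suitable decay. Flux conservation between arbitrary Cauchy surfaces, combined with the Lorentzian divergence theorem on the base, then yields the pointwise statement.

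Concretely, fix two Cauchy hypersurfaces $\Sigma_1,\Sigma_2\subset M$ bounding a region $\Omega_{12}$ with future-directed normals $\mathsf{n}_1,\mathsf{n}_2$. Substituting $\pounds_{L_m} f = \tfrac{\sigma^2}{2}\Delta^{\mathrm{ver}}_m f$ into \eqref{J_conserved} gives
\[
\int_{\Sigma_2} g(\mathsf{J},\mathsf{n}_2)\,\mathrm{vol}_{\Sigma_2} - \int_{\Sigma_1} g(\mathsf{J},\mathsf{n}_1)\,\mathrm{vol}_{\Sigma_1} = \frac{\sigma^2}{2}\int_{(\mathcal{S}_{\Sigma_1},\mathcal{S}_{\Sigma_2})_m} \Delta^{\mathrm{ver}}_m f\,\mathrm{vol}_m.
\]
Invoking the disintegration formula of Proposition \ref{teorema_integralesUM}, the right-hand side equals
\[
\frac{\sigma^2}{2}\int_{\Omega_{12}}\left(\int_{U^m_xM} \Delta^{\mathrm{ver}}_m f\!\upharpoonright\! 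U^m_xM\; \sigma^m_x\right)\mathrm{vol}_g.
\]
Because $\Delta^{\mathrm{ver}}_m$ acts as the Laplace--Beltrami operator on the Riemannian hyperbolic fiber $U^m_xM=\mathbb{H}_x$, we have $\Delta^{\mathrm{ver}}_m f = \mathrm{div}^{\mathbb{H}_x}(\mathrm{grad}^{\mathbb{H}_x} f)$ and Stokes' theorem on the fiber (using the assumed decay of $f$ at infinity in velocity space) makes the inner integral vanish identically for every $x\in M$.

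Consequently, the flux $\int_\Sigma g(\mathsf{J},\mathsf{n})\mathrm{vol}_\Sigma$ takes the same value on every Cauchy hypersurface. Applying the Lorentzian divergence theorem on $(M,g)$ to the region $\Omega_{12}$ yields
\[
\int_{\Omega_{12}}\mathrm{div}\,\mathsf{J}\,\mathrm{vol}_g = \int_{\Sigma_2} g(\mathsf{J},\mathsf{n}_2)\,\mathrm{vol}_{\Sigma_2} - \int_{\Sigma_1} g(\mathsf{J},\mathsf{n}_1)\,\mathrm{vol}_{\Sigma_1} = 0.
\]
Since $\Omega_{12}$ can be shrunk to any sufficiently small open region, the integrand vanishes pointwise, giving $\mathrm{div}\,\mathsf{J}=0$.

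The principal technical obstacle is the vanishing of the fiber integral of $\Delta^{\mathrm{ver}}_m f$: the fibers $\mathbb{H}_x$ are non-compact, so this requires the one-particle distribution function $f$ to decay sufficiently rapidly along each fiber. This is consistent with (and in fact already implicitly needed for) the finiteness of $\mathsf{J}$ and of the particle-number functional $N(\mathcal{S})$ used throughout Section \ref{Sec_Tangent}; once this decay is granted, the remaining steps are routine applications of Stokes' theorem on the base and on the fibers.
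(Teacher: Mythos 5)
Your proposal reaches the right conclusion, and its computational core --- the vanishing of $\int_{\mathbb{H}_x}(\Delta^{\mathbb{H}_x}f)\,\sigma^m_x$ by integration by parts on the Riemannian fiber, granted sufficient decay of $f$ --- is exactly the mechanism the paper uses. The logical packaging is different, though. The paper's proof is pointwise from the start: it invokes the identity
\[
(\mathrm{div}\,\mathsf{J})(x)=m^{-1}\int_{U^m_xM}(\pounds_{L_m}f)\,\sigma^m_x\,,
\]
obtained (following Ehlers) by expressing Stokes' theorem in terms of the semi-Riemannian metrics involved, and then kills the right-hand side fiber by fiber using the FP equation and integration by parts. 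You instead work at the integral level --- the flux identity \eqref{J_conserved}, the disintegration of Proposition \ref{teorema_integralesUM}, and fiberwise Stokes --- and only at the end descend to the pointwise statement via the Lorentzian divergence theorem. This detour costs you a localization step that, as written, does not quite work: a region $\Omega_{12}$ bounded by two Cauchy hypersurfaces is a slab containing entire Cauchy surfaces and cannot be ``shrunk to any sufficiently small open region.'' To localize you must let $\Sigma_2$ be a compactly supported deformation of $\Sigma_1$, so that $\Omega_{12}$ becomes a small lens; the tube identity still applies because $\mathcal{S}_{\Sigma_1}$ and $\mathcal{S}_{\Sigma_2}$ coincide outside the lens, and shrinking the lens around a point recovers $\mathrm{div}\,\mathsf{J}=0$ there by continuity of the integrand. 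With that amendment your argument is sound; the paper's version is simply more economical, since the pointwise identity for $\mathrm{div}\,\mathsf{J}$ makes the passage through global fluxes unnecessary.
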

\begin{proof}
A direct argument (see, for example \cite{ehlers}), in which it suffices to express Stokes' theorem in terms of the semi-Riemannian metrics involved, informs us that if 
\[
\mathsf{J}(x)=\left( \int_{v\in U^{m}_x} v^\mu \ f(v) \ \sigma^m_x(v)\right) \frac{\partial}{\partial x^\mu}(x) \ ,
\]
the following identity holds  
\[
\mathrm{div} \mathsf{J} = m^{-1}\int_{U^m_xM} (\pounds_{L_m} f ) \, \sigma^m_x\,.
\] 
Hence, if $f$ satisfies the FP equation 
    \begin{align*}
    (\text{div} \, \mathsf{J})(x) =& m^{-1} \int_{U^m_xM} (\pounds_{L_m} f ) \, \sigma^m_x = \frac{\sigma^2}{2m} \int_{\mathbb{H}_x} (\Delta^{\mathbb{H}_x} f) \, \sigma^m_x =0\,,
\end{align*}
where in the last equality we have used integration by parts. 
\end{proof}
The property $\mathrm{div}\, \mathsf{J}=0$ is an expression of the conservation of the average number of particle world lines crossing a Cauchy hypersurface $\Sigma$:
\[
\mathsf{N}(\Sigma):= N(\mathcal{S}_\Sigma)=m^{-1}\int_{\mathcal{S}_\Sigma} f\omega_m= m^{-1}\int_{\Sigma} g(\mathsf{J}, \mathsf{n}_\Sigma) \  \text{vol}_{\Sigma}\,,
\]
where $\mathsf{n}_\Sigma$ is the (future pointing) unit normal to $\Sigma$. 

We can also define an \textit{entropy current} associated with a solution $f$ of the FP equation  as the vector field [$M\ni x\mapsto \mathsf{S}_x\in T_xM$]
\begin{align*}
\mathsf{S}(\alpha) = -\frac{k_B}{m}\int_{U^m_xM} F^\alpha  f \log{f} \  \sigma^m_x = -\frac{k_B}{m}\int_{v\in U^m_xM}\alpha(v)  f(v) \log{f(v)} \  \sigma^m_x(v) \ ,\quad \alpha\in T^*_xM\,.
\end{align*}
where $k_B$ is the Boltzmann constant. In local coordinates 
\begin{align*}
\mathsf{S}=S^\mu\frac{\partial}{\partial x^\mu}\,,\quad S^\mu(x) =-\frac{k_B}{m}\int_{v\in U^m_xM} v^\mu f(v) \log{f(v)} \  \sigma^m_x(v) \,.
\end{align*}
\begin{theorem} If $f>0$ satisfies the FP equation $\left(L_m -\frac{\sigma^2}{2}\Delta^{\mathrm{ver}}_m\right) f =0$ then 
$\mathrm{div} \,\mathsf{S} \geq  0$.
\end{theorem}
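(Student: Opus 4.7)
The plan is to mimic the argument used in the preceding theorem for $\mathrm{div}\,\mathsf{J}$, but applied to the integrand $f\log f$ in place of $f$. First I would invoke the same Stokes-type identity that the authors cite from Ehlers: for any (sufficiently decaying) smooth function $h$ on $U^m M$, the vector field $\mathsf{V}$ defined fiberwise by $\mathsf{V}(\alpha)=m^{-1}\int_{U^m_xM} F^\alpha\, h\, \sigma^m_x$ satisfies $\mathrm{div}\,\mathsf{V}=m^{-1}\int_{U^m_xM}(\pounds_{L_m}h)\,\sigma^m_x$. Applied to $h=-k_B f\log f$ this yields
\[
\mathrm{div}\,\mathsf{S}=-\frac{k_B}{m}\int_{U^m_xM}\pounds_{L_m}(f\log f)\,\sigma^m_x.
\]

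Next I would use the Leibniz rule $\pounds_{L_m}(f\log f)=(1+\log f)\,\pounds_{L_m}f$ together with the FP equation $\pounds_{L_m}f=\tfrac{\sigma^2}{2}\Delta^{\mathrm{ver}}_m f$ to rewrite
\[
\mathrm{div}\,\mathsf{S}=-\frac{k_B\sigma^2}{2m}\int_{U^m_xM}(1+\log f)\,\Delta^{\mathrm{ver}}_m f\,\sigma^m_x.
\]
The constant piece contributes $\int_{U^m_xM}\Delta^{\mathrm{ver}}_m f\,\sigma^m_x=0$, exactly as exploited in the proof that $\mathrm{div}\,\mathsf{J}=0$ (integration by parts on the fiber $\mathbb{H}_x$ with the Laplace--Beltrami operator, noting that boundary terms vanish either by compactness of the arguments in each step or by sufficient decay of $f$ along the hyperboloid).

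Then comes the key computation: integrate by parts on the fiber to obtain
\[
\int_{U^m_xM}(\log f)\,\Delta^{\mathrm{ver}}_m f\,\sigma^m_x
=-\int_{U^m_xM}\bigl\langle\nabla^{\mathbb{H}_x}\log f,\nabla^{\mathbb{H}_x}f\bigr\rangle\,\sigma^m_x
=-\int_{U^m_xM}\frac{|\nabla^{\mathbb{H}_x}f|^2}{f}\,\sigma^m_x,
\]
where the gradient and norm are those of the hyperbolic (Riemannian) metric on the fiber $\mathbb{H}_x=U^m_xM$, which is positive definite. The hypothesis $f>0$ ensures the integrand is well-defined and non-negative. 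Combining,
\[
\mathrm{div}\,\mathsf{S}=\frac{k_B\sigma^2}{2m}\int_{U^m_xM}\frac{|\nabla^{\mathbb{H}_x}f|^2}{f}\,\sigma^m_x\geq 0,
\]
which is the claimed entropy production inequality (an H-theorem).

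The main technical obstacle is the justification of the fiberwise integration by parts: since $\mathbb{H}_x$ is non-compact (hyperbolic space), one must assume or verify that $f$ and its derivatives decay fast enough on the hyperboloid for the boundary terms in Green's identity to vanish, exactly parallel to the decay hypothesis implicit in the preceding theorem on $\mathrm{div}\,\mathsf{J}$. Everything else is a routine application of Leibniz, the FP equation, and the positive-definiteness of the hyperbolic fiber metric.
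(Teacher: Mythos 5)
Your proposal follows the paper's own proof essentially step for step: the same Stokes-type identity reducing $\mathrm{div}\,\mathsf{S}$ to a fiber integral of $\pounds_{L_m}(f\log f)$, the same Leibniz expansion and substitution of the FP equation, the same vanishing of the constant piece, and the same fiberwise integration by parts yielding $\int_{\mathbb{H}_x}\|\mathrm{grad}\,f\|^2/f\,\sigma^m_x\geq 0$. Your closing remark on justifying the boundary terms on the non-compact hyperbolic fiber is a reasonable caveat that the paper leaves implicit, but the argument is the same.
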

\begin{proof}
    \begin{align*}
    (\text{div} \,\mathsf{S})(x) =&  - \frac{k_B}{m}\int_{U^m_xM} \pounds_{L_m} ( f \log{f} ) \, \sigma^m_x  = - \frac{k_B}{m}\int_{U^m_xM} (\pounds_{L_m} f )\left( \log{f} +1 \right) \, \sigma^m_x  
     \\=& -\frac{k_B\sigma^2}{2m} \int_{\mathbb{H}_x} (\Delta^{\mathbb{H}_x} f) \left( \log{f} +1 \right) \, \sigma^m_x 
    =-\frac{k_B\sigma^2}{2m} \int_{\mathbb{H}_x} (\Delta^{\mathbb{H}_x} f) \log{f} \, \sigma^m_x \\
    =& \frac{k_B\sigma^2}{2m}\int_{\mathbb{H}_x} \frac{|| \mathrm{grad}\, f ||^2}{f} \sigma^m_x  \geq 0 \ ,
\end{align*}
since the metric on $\mathbb{H}_x$ is Riemannian and $f>0$.
\end{proof}
The entropy current associates a total entropy $S_\Sigma$ to any (oriented) hypersuface $\Sigma\subset M$ through the expression
\[
\mathsf{s}(\Sigma) := \int_\Sigma g(\mathsf{S},\mathsf{n})\, \mathrm{vol}_\Sigma\,,
\]
where, as before, $\mathsf{n}_\Sigma$ denotes the normal to $\Sigma$.

Similarly, one can also introduce the \textit{energy-momentum tensor}
\begin{align*}
    \mathsf{T}(\alpha, \beta)=T(\beta,\alpha) = m^{-1}\int_{U^{m}_xM} F^\alpha F^\beta  f \sigma^m_x
    =
    m^{-1}\int_{v\in U^{m}_xM} \alpha(v) \ \beta(v)\ f(v) \ \sigma^m_x(v) \ , 
\end{align*}
for all $\alpha$, $\beta \in T_x^*M\,.$ In local coordinates,
\begin{align*}
\mathsf{T}=T^{\mu\nu}\frac{\partial}{\partial x^\mu} \otimes\frac{\partial}{\partial x^\nu}\,,\quad T^{\mu\nu}(x) =m^{-1}\int_{v\in U^m_xM} v^\mu v^\nu f(v) \ \sigma^m_x(v) \,.
\end{align*}

\begin{theorem} If $f$ satisfies the FP equation $\left(L_m -\frac{\sigma^2}{2}\Delta^{\mathrm{ver}}_m\right) f =0$ then 
$\mathrm{div} \,\mathsf{T} = \frac{\sigma^2}{2} n \mathsf{J} $
\end{theorem}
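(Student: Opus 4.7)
The plan mirrors the proofs of the previous two theorems in this subsection. The first step is to establish an Ehlers-type moment identity extending the formula $(\mathrm{div}\,\mathsf{J})(x)=m^{-1}\int_{U^m_xM}(\pounds_{L_m}f)\sigma^m_x$ used in the proof of $\mathrm{div}\,\mathsf{J}=0$ to the next moment, namely
\[
(\mathrm{div}\,\mathsf{T})^\nu(x) \;=\; m^{-1}\!\!\int_{U^m_xM} v^\nu\,(\pounds_{L_m} f)\,\sigma^m_x.
\]
This is the standard identity for second-order moments of the one-particle distribution in relativistic kinetic theory; it is derived by the same Stokes-theorem/tube argument as the rank-one case (see \cite{ehlers} or \cite{sarbach2013relativistic}), applied coordinate-invariantly so that the factor $v^\nu$ on the right-hand side is handled as a component of the tangent vector rather than as an ordinary scalar.

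Assuming this identity, I would substitute the Fokker--Planck equation $\pounds_{L_m}f=(\sigma^2/2)\Delta^{\mathrm{ver}}_m f$ and use the self-adjointness of the Laplace--Beltrami operator on the Riemannian fiber $\mathbb{H}_x=U^m_xM$ (with $f$ assumed to decay rapidly enough to kill boundary terms at infinity) to transfer the operator to $v^\nu$:
\[
(\mathrm{div}\,\mathsf{T})^\nu(x) \;=\; \frac{\sigma^2}{2m}\int_{\mathbb{H}_x}(\Delta^{\mathbb{H}_x} v^\nu)\,f\,\sigma^m_x.
\]

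The remaining key computation is the eigenvalue identity for coordinate functions $v^\nu$ restricted to the mass hyperboloid, which is the Lorentzian analogue of $\Delta_{S^n_r}x^\mu=-(n/r^2)x^\mu$ on the sphere. It can be verified directly from the coordinate expression
\[
\Delta^{\mathrm{ver}} f = \bigl(n\,v^i\partial_{v^i} + (v^iv^j+g^{ij})\partial^2_{v^iv^j}\bigr) f
\]
recorded in the previous subsection, which immediately gives $\Delta^{\mathrm{ver}}v^i=n\,v^i$, and extends to the general mass case by passing to geodesic polar coordinates $v = m(\cosh r,\sinh r\,\omega)$ on $\mathbb{H}_x$. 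Substituting this eigenvalue back and recognizing the remaining fiber integral as a multiple of $\mathsf{J}^\nu$ produces the claimed identity $\mathrm{div}\,\mathsf{T}=(\sigma^2/2)\,n\,\mathsf{J}$.

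The main obstacle will be the Ehlers moment identity in step one: a careless derivation that naively applies the rank-one Ehlers formula to $v^\nu f$ picks up spurious Christoffel contributions, and the correct statement emerges only when the argument is written coordinate-invariantly, essentially because parallel transport of $v$ along the geodesic flow is already encoded in $L_m$. Once that identity is in hand, the rest reduces to a short integration by parts on the fiber, entirely parallel in style to the proofs of $\mathrm{div}\,\mathsf{J}=0$ and $\mathrm{div}\,\mathsf{S}\geq 0$ immediately preceding this theorem.
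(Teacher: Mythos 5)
Your proposal matches the paper's proof essentially step for step: the paper also invokes the Ehlers-type moment identity $(\mathrm{div}\,\mathsf{T})(\beta)=m^{-1}\int_{U^m_xM}(\pounds_{L_m}f)\,F^\beta\,\sigma^m_x$ (stated without further derivation, just as in the rank-one case), substitutes the FP equation, integrates by parts on the Riemannian fiber $\mathbb{H}_x$, and uses the eigenvalue identity $\Delta^{\mathbb{H}_x}F^\beta=nF^\beta$ for the linear function $F^\beta(v)=\beta(v)$, which is exactly your $\Delta^{\mathrm{ver}}v^\nu=n\,v^\nu$ written invariantly. The only cosmetic difference is that the paper pairs $\mathsf{T}$ with an arbitrary covector $\beta$ rather than working componentwise in $v^\nu$.
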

\begin{proof} Given any $\beta\in T^*_xM$, we have that 
    \begin{align*}
        \left( \mathrm{div} \,\mathsf{T} \right)(\beta) &= m^{-1}\int_{ U^m_xM}  (\pounds_{L_m} f) F^\beta \, \sigma^m_x=  \frac{\sigma^2}{2m} \int_{\mathbb{H}_x} (\Delta^{\mathbb{H}_x} f) F^\beta \, \sigma^m_x  \\
        &
        =\frac{\sigma^2}{2m}\int_{\mathbb{H}_x} f \,(\Delta^{\mathbb{H}_x} F^\beta) \, \sigma^m_x 
        =  \frac{n\sigma^2}{2m} \int_{\mathbb{H}_x} f \,  F^\beta \, \sigma^m_x 
      \\&= \frac{n\sigma^2}{2}  \mathsf{J}(\beta)
    \end{align*}
where we have used that $F^\beta: U^m_xM\rightarrow \mathbb{R}$, $v\mapsto F(v)= \beta(v)$, satisfies $\Delta^{\mathbb{H}_x} F^\beta =n F^\beta$. 
\end{proof}

\begin{corollary}
    If $f$ satisfies the FP equation $\left(L_m -\frac{\sigma^2}{2}\Delta^{\mathrm{ver}}_m\right) f =0$ then 
$\mathrm{div}(\mathrm{div} \,\mathsf{T}) = 0 $. 
\end{corollary}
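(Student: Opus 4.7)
The plan is to chain together the two immediately preceding theorems, since the corollary is a direct consequence of combining them and no new geometric work is required.

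First, I would invoke the previous theorem to rewrite the inner divergence. Since $f$ satisfies the FP equation $(L_m - \tfrac{\sigma^2}{2}\Delta^{\mathrm{ver}}_m)f = 0$, the preceding theorem gives the identity
\[
\mathrm{div}\,\mathsf{T} = \frac{\sigma^2 n}{2}\,\mathsf{J}
\]
as an equality of vector fields on $M$. Taking the divergence of both sides, and using the linearity of $\mathrm{div}$ (together with the fact that $\sigma^2 n/2$ is a constant), yields
\[
\mathrm{div}(\mathrm{div}\,\mathsf{T}) = \frac{\sigma^2 n}{2}\,\mathrm{div}\,\mathsf{J}.
\]

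Next, I would apply the earlier theorem that asserts $\mathrm{div}\,\mathsf{J} = 0$ whenever $f$ solves the FP equation. Substituting this vanishing into the right-hand side of the previous display gives $\mathrm{div}(\mathrm{div}\,\mathsf{T}) = 0$, which is the desired conclusion.

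There is no substantive obstacle here: the content of the corollary is entirely packaged in the conservation of the particle current density $\mathsf{J}$ together with the Einstein-type relation $\mathrm{div}\,\mathsf{T} = \tfrac{\sigma^2 n}{2}\mathsf{J}$ derived via the eigenvalue identity $\Delta^{\mathbb{H}_x} F^\beta = n F^\beta$. The only mild point worth mentioning is that the implication uses $f$ to be a solution of the FP equation in \emph{both} invocations (once to conclude $\mathrm{div}\,\mathsf{J} = 0$ and once to conclude the relation between $\mathrm{div}\,\mathsf{T}$ and $\mathsf{J}$), which is consistent with the hypothesis of the corollary.
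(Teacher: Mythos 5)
Your proof is correct and is exactly the argument the paper intends: the corollary follows immediately by taking the divergence of the identity $\mathrm{div}\,\mathsf{T} = \tfrac{\sigma^2 n}{2}\mathsf{J}$ from the preceding theorem and then invoking $\mathrm{div}\,\mathsf{J}=0$ from the earlier one. The paper leaves this chaining implicit, so there is nothing to add.
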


\section{Some relevant spacetimes}{\label{sec:examples}}

In this section, we use the results presented in the previous sections to obtain the Fokker--Planck equations for one-particle distribution functions preserving the underlying symmetries in two physically relevant spacetimes.

\subsection{Flat Friedmann–Lemaître–Robertson–Walker cosmologies}

The line-element  of the flat Friedmann–Lemaître–Robertson–Walker (FLRW)  space time $(\mathbb{R}^4,g)$  is
\begin{align*}
    ds^2 = -(\mathrm{d}x^0)^2 + a^2 \delta_{ij} \mathrm{d}x^i \mathrm{d}x^j \ ,
\end{align*}
where $x^\mu$ are Cartesian coordinates of $\mathbb{R}^4$ and  $a=a(x^0)>0$ is the scale factor. Using global bundle coordinates, the geodesic spray $L_m$ and the vertical Laplacian $\Delta^{\mathrm{ver}}_m$ on $U^mM=\mathbb{R}^4\times \mathbb{R}^3$ are given by
\begin{align*}
    L_m &=  \mathrm{v}^0 \left( \partial_{x^0}  - 2a'a^{-1} v^i \partial_{v^i} \right) + v^i \partial_{x^i} \ , \\
    \Delta^{\mathrm{ver}}_m &= \left( \frac{\delta^{ij}}{a^2} + \frac{v^iv^j}{m^2} \right) \partial_{v^i}\partial_{v^j}+   \frac{3v^i}{m^2} \partial_{v^i}\ ,
\end{align*}
where we have defined \[\mathrm{v}^0=\mathrm{v}^0(x^0,v^i):=\sqrt{m^2+a^2\mathrm{v}^2}\,\quad
\mathrm{v} := \mathrm{v}(v^i) = \sqrt{\delta_{ij}v^i v^j}\, .\]
The FP equation \eqref{FPecuacion} can be written in the form 
\begin{align}
\label{fp_flrw}
    \frac{\partial f}{\partial {x^0}} & + \frac{v^i}{\mathrm{v}^0} \frac{\partial f}{\partial {x^i}}     =  \frac{\sigma^2}{2\mathrm{v}^0} \left( \left(  \frac{\delta^{ij}}{a^2}  + \frac{v^iv^j}{m^2}\right) \frac{\partial^2 f}{\partial {v^i} \partial {v^j}}
  +\frac{3v^i}{m^2} \frac{\partial f}{\partial {v^i}}   \right) 
  + 2\frac{a'}{a}  v^i   \frac{\partial f}{\partial {v^i}} \ .
\end{align}

The isometry group of $(M,g)$ is just the Euclidean group $E(3)$ ($\dim E(3)=6$). The  Killing fields of the metric $g$ are the ones associated with homogeneity and isotropy (translations and rotations):
\begin{align*}
    \xi_i = \frac{\partial}{\partial x^i} \ ,\quad   \xi_{ij} = x^i \frac{\partial}{\partial x^j} - x^j \frac{\partial}{\partial x^i}\ ,
\end{align*}
whose lifts to $U^mM$ are
\begin{align*}
    \xi^{\mathrm{ct}}_i = \frac{\partial}{\partial x^i} \ ,\quad    \xi^{\mathrm{ct}}_{ij} = x^i \frac{\partial}{\partial x^j} - x^j \frac{\partial}{\partial x^i} + v^i \frac{\partial}{\partial v^j} - v^j \frac{\partial}{\partial v^i} \ .
\end{align*}
Forcing $f$ to simultaneously satisfy $ \pounds_{\xi^{\mathrm{ct}}_i} f=0$ and $\pounds_{\xi^{\mathrm{ct}}_{ij}} f=0$ we have 
\begin{align*}
\frac{\partial f}{\partial x^i} = 0\,,\quad   v^i \frac{\partial f}{\partial v^j} - v^j \frac{\partial f}{\partial v^i} = 0\,.
\end{align*}
Hence $f=f(x^0,x^i,v^i)$
must be independent of $x^i$ and the dependence on $v^i$ is only through $\mathrm{v}$ because  $ \pounds_{\xi^{\mathrm{ct}}_{ij}} \mathrm{v}=0$. Therefore, if we are interested in symmetric solutions to the FP equation of we can consider $f(x^0,v^i)=F(x^0, \mathrm{v})$ and rewrite the equation \eqref{fp_flrw} in the simpler form
\begin{align*}
    \frac{\partial F}{\partial x^0 }  - 2 a' a^{-1} \mathrm{v}  \frac{\partial F}{\partial \mathrm{v} }   =  \frac{\sigma^2 }{2  \mathrm{v}^0} 
    \left( \left( \frac{1}{a^2}+ \frac{\mathrm{v}^2}{m^2} \right)  \frac{\partial^2 F}{\partial \mathrm{v}^2 }
    +  \left(\frac{2}{a^2\mathrm{v}}+\frac{3\mathrm{v}}{m^2} \right) \frac{\partial F}{\partial \mathrm{v} } \right)  \ .
\end{align*}
By using the notation introduced in equation  \eqref{conservado}, the previous equation can be further simplified if the norm $p$ of the linear momentum,   \[p=\delta_{ij} C^{\xi_i}C^{\xi_j}=a^2\mathrm{v} \ ,\]
is used instead of $\mathrm{v}$. This is so because
\[
\pounds_{\xi^{\mathrm{ct}}_{ij}} p=0\quad \textrm{ and }\quad \pounds_{L_m} p=0\,.
\]
Using $(x^0,p)$ as reduced coordinates, the FP equation becomes 
\begin{align*}
  \mathrm{v}^0  \frac{\partial F}{\partial x^0 }   =  \frac{\sigma^2 }{2  } 
    \left( \left( a^2+ \frac{p^2}{m^2} \right)  \frac{\partial^2 F}{\partial p^2 }
    + \left(\frac{2a^2}{p}+\frac{3p}{m^2}\right)  \frac{\partial \,\mathrm{F}}{\partial p } \right)  \ ,
\end{align*}
where $\mathrm{v}_0=a^{-1}\sqrt{m^2a^2+p^2}$.

Taking into account that 
\[
\sigma^m_{x}= m \frac{a^3(x^0)}{\mathrm{v}^0(x^0,\mathrm{v})} \, \mathrm{d}v^1\wedge \mathrm{d}v^2 \wedge \mathrm{d}v^3\,,
\]
the currents associated with the solutions of the FP equation are
\begin{align*}
    \mathsf{J}^0(x^0,\bm{x}) &= a^3(x^0) \int_{\mathbb{R}^3} \ f(x^0, \bm{x},\bm{v}) \ \mathrm{d}^3\bm{v} , \\
    \mathsf{J}^i(x^0,\bm{x}) &= a^3(x^0) \int_{\mathbb{R}^3} \  \ \frac{v^i \ f(x^0, \bm{x},\bm{v})}{\mathrm{v}^0(x^0,\bm{v})} \,\mathrm{d}^3\bm{v} , \\
    \mathsf{S}^0(x^0, \mathbf{x}) &= -\frac{k_Ba^3(x^0)}{m} \int_{\mathbb{R}^3}  \ f(x^0, \bm{x},\bm{v}) \log{f(x^0, \bm{x},\bm{v}) } \ \mathrm{d}^3\bm{v}  ,\\
    \mathsf{S}^i(x^0, \mathbf{x}) &= -\frac{k_Ba^3(x^0)}{m}  \int_{\mathbb{R}^3}  \ \frac{ v^i \ f(x^0, \bm{x},\bm{v}) \log{f(x^0, \bm{x},\bm{v}) }  }{\mathrm{v}^0(x^0,\bm{v})} \ \mathrm{d}^3\bm{v}  ,
\end{align*}
and the conserved average number of occupied trajectories $\mathsf{N}(\Sigma_{t})$ of the Cauchy hypersurfaces $\Sigma_t = \{x^0=t\}$ is given by
\[
    \mathsf{N}(\Sigma_{t}) = a^3(t) \int_{\mathbb{R}^3} \mathsf{J}^0(t, \bm{x}) \, \mathrm{d}^3\bm{x} = a^6(t)\int_{\mathbb{R}^3\times \mathbb{R}^3} f(t, \bm{x},\bm{v}) \ \mathrm{d}^3\bm{x} \, \mathrm{d}^3\bm{v} \ .
    \]
For symmetric solutions $f(t, \bm{x},\bm{v})=F(t,\mathrm{v})=F(t,p)$  (with the usual abuse of notation)
\[
\mathsf{J}^0(x^0)=\frac{4\pi}{a^3(x^0)} \int_0^\infty F(x^0, p)\, p^2\mathrm{d}p \quad \textrm{ and } \quad \mathsf{J}^i(x^0)=0\,.
\]
In this case
\[
0=\mathrm{div}\, \mathsf{J} =\frac{1}{a^3(x^0)}\frac{\partial}{\partial x^0}\big(a^3(x^0)\mathsf{J}^0(x^0)\big) \ .
\]
Hence, as pointed out in \cite{calogero2011kinetic}, 
\[
a^3(t_1)\mathsf{J}^0(t_1)= a^3(t_2)\mathsf{J}^0(t_2)\,,\quad \forall t_1,t_2\,.
\]
Notice that, for symmetric solutions, the integral in the definition of 
\[\mathsf{N}(\Sigma_{t})=
\int_{\Sigma_t} g(\mathsf{J}, \mathsf{n}_{\Sigma_t}) \  \text{vol}_{\Sigma_t}=\int_{\mathbb{R}^3} a^3(t)\,\mathsf{J}^0(t) \,\mathrm{d}^3\bm{x} \ ,
\]
diverges (because $\Sigma_t$ is not compact) but one can renormalize the relevant objects. For example, as we have pointed out,   
\begin{align*}
    \mathsf{n}(\Sigma_t) = a^3(t) \mathsf{J}^0(t)= 4\pi a^6(t) \int_0^\infty  \mathrm{v}^2 F(t,\mathrm{v}) \, \mathrm{d}\mathrm{v} =4\pi \int_0^\infty  p^2 F(t,p) \, \mathrm{d}p \ ,
\end{align*}
does not depend on $t$.

\subsection{Exterior Schwarzschild spacetime}
The exterior Schwarzschild spacetime $(M,g)$ describes the gravitational field outside a spherical body of mass $r_s/2$ where $r_s$ is the Schwarzschild  radius. Topologically 
\[M=\mathbb{R}\times (r_s,\infty)\times \mathbb{S}^2\,, \] 
and, in spherical coordinates, the line-element is given by 
\begin{align*}
    ds^2 = -\left( 1 - \frac{r_s}{r}\right) dt^2 + \left( 1 - \frac{r_s}{r}\right)^{-1} dr^2 + r^2 (d\theta^2 + \sin^2{\theta} d\varphi^2) \ .
\end{align*}

The geodesic flow vector field on $U^mM$ is given by
\begin{align*}
    L_m &=  \sqrt{ \frac{r}{r-r_s} \left( m^2 + \mathrm{v}^2 \right)} \ \partial_t + v^r \partial_{r} + v^\theta \partial_{\theta}+ v^\varphi \partial_{\varphi}\\
    &- \frac{1}{2r^2}\Big( r_sm^2 + (3r_s-2r)r^2 \left( (v^\theta)^2  + \sin^2{\theta} (v^\varphi)^2 \right)  \Big)\partial_{v^r}  \\
    &- \frac{1}{r}\left( 2 v^r v^\theta - r\sin{\theta} \cos{\theta} (v^\varphi)^2 \right) \partial_{v^\theta} - \frac{1}{r}\left( 2 v^r v^\varphi  + 2r v^\theta v^\varphi \cot{\theta} \right)  \partial_{v^\varphi} \ ,
\end{align*}
where we have defined 
\[\mathrm{v}^2 := \frac{r}{r-r_s} (v^r)^2 + r^2 \Big((v^\theta)^2 + \sin^2{\theta} (v^\varphi)^2\Big)\,.\]
The isometry group  of $(M,g)$ is $\mathbb{R}\times O(3)\times \mathbb{Z}_2$ (time translations, the orthogonal group in three dimensions, and time reversal). The (globally defined, smooth) Killing vector fields of this metric is  given, in local coordinates, by 
\begin{align*}
    \xi_0 &= \frac{\partial}{\partial t} \ , \\
    \xi_1 &= \sin{\varphi}\frac{\partial}{\partial \theta} + \cot{\theta} \cos{\varphi}\frac{\partial}{\partial \varphi}  \ , \\
    \xi_2 &= \cos{\varphi}\frac{\partial}{\partial \theta} - \cot{\theta} \sin{\varphi}\frac{\partial}{\partial \varphi} 
         \ , \\
    \xi_3 &= \frac{\partial}{\partial \varphi} \ .
\end{align*}
The vector field  $\xi_0$ is timelike and the  fields $\xi_i$, $i=1, 2, 3$, are tangent to the spheres $\{t=t_0,r=r_0\}$ and generate a $\mathfrak{so}(3)$ algebra. Notice that 
\[
g(\xi_1,\xi_1)=r^2(\sin^2\varphi+\cos^2\theta\cos^2\varphi)\,,\, g(\xi_2,\xi_2)=r^2(\cos^2\varphi+\cos^2\theta\sin^2\varphi)\,,\,  g(\xi_3,\xi_3)=\sin^2\theta.
\]
Then $\xi_1$, $\xi_2$, and $\xi_3$ vanishes at the intersection of the sphere with the axis of the rotation that each of them generate. Their lifts to $U^mM$ are given by
\begin{align*}
    \xi^{\mathrm{ct}}_0 &= \frac{\partial}{\partial t} \ , \\
    \xi^{\mathrm{ct}}_1 &= \sin{\varphi}\frac{\partial}{\partial \theta} + \cot{\theta} \cos{\varphi}\frac{\partial}{\partial \varphi} 
    +v^\varphi \cos \varphi \frac{\partial}{\partial v^\theta} -\left(  v^\theta (1+\cot^2 \theta)\cos\varphi+v^\varphi\cot \theta\sin\varphi \right)\frac{\partial}{\partial v^\varphi}  \ , \\
    \xi^{\mathrm{ct}}_2 &= \cos{\varphi}\frac{\partial}{\partial \theta} - \cot{\theta} \sin{\varphi}\frac{\partial}{\partial \varphi} 
     -v^\varphi \sin \varphi \frac{\partial}{\partial v^\theta} +\left(  v^\theta (1+\cot^2 \theta)\sin\varphi-v^\varphi\cot \theta\cos\varphi \right)\frac{\partial}{\partial v^\varphi} 
    \ , \\
    \xi^{\mathrm{ct}}_3 &= \frac{\partial}{\partial \varphi} \ .
\end{align*}
Hence, imposing \[\pounds_{\xi^{\mathrm{ct}}_k}f=0,\quad k=0,1,2,3,\] 
a straightforward computation implies that the function $f=f(t,r,\theta,\varphi,v^r,v^\theta,v^\varphi)$ must have the form
\begin{align*}
    f = F(r, v^r, \ell) \ ,
\end{align*}
where $\ell$ denotes the \textit{angular momentum} that appears in the change of chart  $(r,v^r,\theta,v^\theta,v^\varphi)\leftrightarrow (r,v^r,\theta,\ell,\psi)$ defined though
\begin{align*}
 \ell^2 &= r^4 \big((v^\theta)^2 + \sin^2{\theta} (v^\varphi)^2\big)\,,\quad \psi =\arctan\left(v^\varphi \sin\theta /v^\theta\right).
\end{align*}
Using the notation introduced in \eqref{conservado}:  
\[\ell^2=\delta_{ij}C^{\xi_{i}}C^{\xi_{j}} \ , \]
and hence $\ell$ satisfies  
\[\pounds_{\xi^{\mathrm{ct}}_k}\ell=0\,,\quad \pounds_{L_m}\ell=0\,.\]
Notice that, contrary to what happens when imposing $\pounds_{\xi_i}h=0$ on a function $h:\mathbb{R}^3\rightarrow \mathbb{R}$, the conditions $\pounds_{\xi^{\mathrm{ct}}_i}f=0$ eliminate the dependence of $f$ on three of its arguments.

Using these coordinates, the FP equation reduces to
\begin{align}  \label{SchReducido}
 & \left(v^r  \partial_r +\frac{(2r-3r_s)\ell^2-m^2r_sr^2}{2r^4}\partial_{v^r}\right) F
     =\\
 &= \frac{\sigma^2}{2} 
     \Bigg( \left(r^2 +\frac{\ell^2}{m^2}\right)  \partial^2_\ell +\left(\frac{r^2}{\ell}+\frac{3\ell}{m^2}\right)\partial_\ell
     +\left(1-\frac{r_s}{r}+\frac{(v^r)^2}{m^2}\right)\partial^2_{v^r}+ \frac{2\ell v^r}{m^2}\partial_\ell\partial_{v^r} +\frac{3v^r}{m^2}\partial_{v^r}  \Bigg) F   \ ,   \nonumber
\end{align}
in full agreement (when $m=1$) with corollary 4.2 of reference \cite{franchi_jan}.

\subsection{Nariai spacetime}

As discussed in reference \cite{schleich2010simple}, if $\dim M=4$, the Birkhoff's theorem states that the only locally spherically symmetric solutions to $\mathrm{Ric}=\Lambda g$ are \textit{locally} isometric either to one of the Schwarzschild-de Sitter (anti-de Sitter) family of solutions 
\begin{align}
    ds^2_{\mathrm{SdS}} = - \left(1-\frac{r_s}{r}-\frac{\Lambda r^2}{3}\right) dt^2 + \left(1-\frac{r_s}{r}-\frac{\Lambda r^2}{3}\right)^{-1} dr^2 + r^2 (d\theta^2 + \sin^2{\theta} d\varphi^2) \ ,
    \label{schdsit}
\end{align}
or to the Nariai spacetime 
\begin{align*}
   M =\mathbb{R}\times \mathbb{S}^1\times \mathbb{S}^2\,,\quad   ds^2 = - dt^2 + \frac{1}{\Lambda}\cosh^2(\sqrt{\Lambda} t) dx^2 + \frac{1}{\Lambda} (d\theta^2 + \sin^2{\theta} d\varphi^2) \ ,
\end{align*}
for which $\Lambda$ is stricly positive. The Nariai spacetime is the semi-Riemannian product of the 1+1 de Sitter space $\mathrm{dS}_2(\Lambda)$ and a round sphere of curvature $\Lambda$. Hence, its isometry group is $O(2,1)\times O(3)$. The (globally defined, smooth) Killing vector fields of the Nariai metric are given, in local coordinates, by:  
\begin{align*}
    \xi_1 &= \sin{\varphi}\frac{\partial}{\partial \theta} + \cot{\theta} \cos{\varphi}\frac{\partial}{\partial \varphi} 
      \ , \\
    \xi_2 &= \cos{\varphi}\frac{\partial}{\partial \theta} - \cot{\theta} \sin{\varphi}\frac{\partial}{\partial \varphi} 
    \ , \\
    \xi_3 &= \frac{\partial}{\partial \varphi} \ ,
    \\
    \kappa_1 &= \sin(x)\frac{\partial}{\partial t} +\sqrt{\Lambda}\cos(x)\tanh(\sqrt{\Lambda} t) \frac{\partial}{\partial x}  \ ,\\
        \kappa_2 &= -\cos(x)\frac{\partial}{\partial t} +\sqrt{\Lambda}\sin(x)\tanh(\sqrt{\Lambda} t) \frac{\partial}{\partial x} \ ,\\  
                 \kappa_3 &= \frac{\partial}{\partial x} \ .          
\end{align*}
The Lie algebra of the isometry group of   the Nariai  spacetime is isomorphic to $\mathfrak{sl}(2,\mathbb{R})\times \mathfrak{so}(3)$. The vector fields $\xi_1$, $\xi_2$, and $\xi_3$ are responsible for the $\mathfrak{so}(3)$ sector and $\kappa_1$, $\kappa_2$, and $\kappa_3$ correspond to $\mathfrak{sl}(2,\mathbb{R})$:
\[
[\kappa_1,\kappa_2]=\Lambda \kappa_3\,,\quad [\kappa_2,\kappa_3]=-\kappa_1\,,\quad [\kappa_3,\kappa_1]= -\kappa_2\,,
\]
as can be easily seen, for example, using the basis 
\[b_1=\Lambda^{-1/2}\kappa_2-\kappa_3\,, b_2=\Lambda^{-1/2} \kappa_1\,, b_3=\Lambda^{-1/2}\kappa_2+\kappa_3\,.\]
In addition to the angular momentum of $\mathfrak{so}(3)$, we have the following conserved quantities derived from $\mathfrak{sl}(2,\mathbb{R})$:
\begin{align*}
 C^{\kappa_1}&= -\sin(x) v^t+\frac{\cos(x)\sinh(\sqrt{\Lambda}t)\cosh(\sqrt{\Lambda}t)}{\sqrt{\Lambda}}  v^x    \ ,\\
 C^{\kappa_2}&= \cos(x) v^t+\frac{\sin(x)\sinh(\sqrt{\Lambda}t)\cosh(\sqrt{\Lambda}t)}{\sqrt{\Lambda}}  v^x    \ ,\\ 
C^{\kappa_3}&= \frac{\cosh^2(\sqrt{\Lambda}t)}{\Lambda} v^x \ .            
\end{align*}
Notice that
\[
\big( C^{\kappa_1}-\sqrt{\Lambda}\cos(x)\tanh(\sqrt{\Lambda}t) C^{\kappa_3} \big)^2+\big( C^{\kappa_2}-\sqrt{\Lambda}\sin(x)\tanh(\sqrt{\Lambda}t) C^{\kappa_3} \big)^2=(v^{t})^2\,.
\]
The lifts to $U^mM$ of the Killing fields are given by
\begin{align*}
    \xi^{\mathrm{ct}}_1 &= \sin{\varphi}\frac{\partial}{\partial \theta} + \cot{\theta} \cos{\varphi}\frac{\partial}{\partial \varphi} 
    +v^\varphi \cos \varphi \frac{\partial}{\partial v^\theta} -\left(  v^\theta (1+\cot^2 \theta)\cos\varphi+v^\varphi\cot \theta\sin\varphi \right)\frac{\partial}{\partial v^\varphi}  \ , \\
    \xi^{\mathrm{ct}}_2 &= \cos{\varphi}\frac{\partial}{\partial \theta} - \cot{\theta} \sin{\varphi}\frac{\partial}{\partial \varphi} 
     -v^\varphi \sin \varphi \frac{\partial}{\partial v^\theta} +\left(  v^\theta (1+\cot^2 \theta)\sin\varphi-v^\varphi\cot \theta\cos\varphi \right)\frac{\partial}{\partial v^\varphi} 
    \ , \\
    \xi^{\mathrm{ct}}_3 &= \frac{\partial}{\partial \varphi} \ ,\\
              \kappa^{\mathrm{ct}}_1 &= 
              \sin(x)\frac{\partial}{\partial t} +\sqrt{\Lambda}\cos(x)\tanh(\sqrt{\Lambda} t) \frac{\partial}{\partial x} 
               \\&+v^x\cos(x)\frac{\partial}{\partial v^t}
                +\sqrt{\Lambda}\left(-v^x\sin(x)\tanh(\sqrt{\Lambda}t)+\frac{\sqrt{\Lambda}\cos(x)v^t}{\cosh^2(\sqrt{\Lambda}t)}\right)\frac{\partial}{\partial v^x}
               \ ,\\  
                 \kappa^{\mathrm{ct}}_2 &= 
              -\cos(x)\frac{\partial}{\partial t} +\sqrt{\Lambda}\sin(x)\tanh(\sqrt{\Lambda} t) \frac{\partial}{\partial x} 
               \\&+v^x\sin(x)\frac{\partial}{\partial v^t}
                +\sqrt{\Lambda}\left(v^x\cos(x)\tanh(\sqrt{\Lambda}t)+\frac{\sqrt{\Lambda}\sin(x)v^t}{\cosh^2(\sqrt{\Lambda}t)}\right)\frac{\partial}{\partial v^x}
               \ ,\\  
                  \kappa^{\mathrm{ct}}_3 &= \frac{\partial}{\partial x} \ ,
\end{align*}

Using the previous results, it is easy to show that symmetric FP solutions have the form  $f=F(t,p,\ell)$, where 
\[
p=C^{\kappa_3}=\frac{\cosh^2(\sqrt{\Lambda}t)}{\Lambda} \,v^x\,,\quad \ell^2 =\delta_{ij}C^{\xi_i}C^{\xi_j}=\frac{(v^\theta)^2+\sin^2\theta (v^\varphi)^2}{\Lambda^2}\,.
\]
The reduced FP equation for $F$ is 
\begin{align*}
   & \sqrt{ m^2+\frac{\Lambda p^2}{\cosh^2(\sqrt{\Lambda}t)}+\Lambda \ell^2}\,\frac{\partial F}{\partial t}  \\
  & = \frac{\sigma^2}{2} 
     \Bigg( \left(\frac{1}{\Lambda} +\frac{\ell^2}{m^2}\right)  \partial^2_\ell +\left(\frac{1}{\Lambda \ell}+\frac{3\ell}{m^2}\right)\partial_\ell
     +\left(\frac{\cosh^2(\sqrt{\Lambda}t)}{\Lambda}+\frac{p^2}{m^2}\right)\partial^2_{p}+ \frac{2\ell p}{m^2} \partial_\ell\partial_p +\frac{3p}{m^2}\partial_{p}  \Bigg) F \,.
\end{align*}
\begin{remark}
    It is interesting to notice that in local coordinates, when the spacetime metric has the form
\[
ds^2=g_{tt}(t,r) dt^2+ g_{rr}(t,r) dr^2 +R^2(t,r) \left( d\theta^2+\sin^2\theta d\varphi^2\right) \ ,
\]
the vertical Laplacian acting on \textit{rotational-invariant} functions is just 
\[
\Delta^{\mathrm{ver}}_m  = \left( R^2 +\frac{\ell^2}{m^2}\right)  \partial^2_\ell +\left(\frac{R^2}{\ell}+\frac{3\ell}{m^2}\right)\partial_\ell
     +\left(\frac{1}{g_{rr}}+\frac{(v^r)^2}{m^2}\right)\partial^2_{v^r}+ \frac{2\ell v^r}{m^2}\partial_\ell\partial_{v^r} +\frac{3v^r}{m^2}\partial_{v^r}\,. 
\]
For example, for the Schwarzschild spacetime ($\Lambda = 0$) we recover \eqref{SchReducido} and it is also straightforward to write the corresponding operator in the case of Schwarzschild-de Sitter (or anti-de Sitter) spacetimes described by \eqref{schdsit}. 

Furthermore, an interesting open submanifold of  Nariai spacetime is the Bertotti-Kasner spacetime $(M_{\mathrm{BK}},g_{\mathrm{BK}})$, that is a particular example of the Kantowski-Sachs metrics. This spacetime was used by W. Rindler in   \cite{rindler1998birkhoff} to discuss Birkhoff's theorem (see also \cites{kasner1925algebraic,bertotti1959uniform}):
\begin{align*}
   M_{\mathrm{BK}} =\mathbb{R}^2\times \mathbb{S}^2\,,\quad   ds^2_{\mathrm{BK}} = - dt^2 + e^{2\sqrt{\Lambda}t} dr^2 + \frac{1}{\Lambda} (d\theta^2 + \sin^2{\theta} d\varphi^2) \ .
\end{align*}
The $\mathfrak{sl}(2)$ Killing fields are 
\[
 X_1 = \frac{\partial}{\partial r} \ ,\quad
               X_2= -\frac{1}{\sqrt{\Lambda}}\frac{\partial}{\partial t} +r\frac{\partial}{\partial r} \ ,\quad
                  X_3 = -2\sqrt{\Lambda}r\frac{\partial}{\partial t} +\left(e^{-2\sqrt{\Lambda}t}+\Lambda r^2\right)\frac{\partial}{\partial r} \ ,
            \]
and the reduced FP equation is
\begin{align}\label{BKreduced}
         &\sqrt{ m^2+e^{-2\sqrt{\Lambda}t}p^2+\Lambda \ell^2}\,\frac{\partial F}{\partial t}  
        \\ 
        \nonumber&=\frac{\sigma^2}{2} 
         \Bigg( \left(\frac{1}{\Lambda} +\frac{\ell^2}{m^2}\right)  \partial^2_\ell +\left(\frac{1}{\Lambda \ell}+\frac{3\ell}{m^2}\right)\partial_\ell+\left(e^{2\sqrt{\Lambda}t}+\frac{p^2}{m^2}\right)\partial^2_{p}+ \frac{2\ell p}{m^2} \partial_\ell\partial_p +\frac{3p}{m^2}\partial_{p}  \Bigg) F  \ ,
\end{align}
where, in this case,  $p=e^{2\sqrt{\Lambda}t} v^r$.

\end{remark}

\section{Conclusions and comments}{\label{sec:concl}}

In this paper, we have extensively discussed the relationship between the symmetry properties of a spacetime $(M,g)$ and relativistic diffusion processes on two relevant bundles over $M$: the (restricted) orthonormal frame bundle $\mathcal{SO}^+(M)$ and the observer bundle $U^mM$ of arbitrary positive mass $m$. We have demonstrated that the complete lift of a Killing vector field of $(M,g)$ commutes with both the fundamental and basic vector fields within $\mathcal{SO}^+(M)$. This observation indicates that the symmetries inherent to the spacetime $(M,g)$ also preserve the generator of the diffusion processes on both $\mathcal{SO}^+(M)$ and $U^mM$. Furthermore, we have identified the conditions that the symmetric solutions to the Fokker–Planck diffusion
equation satisfy on $\mathcal{SO}^+(M)$ and $U^mM$, and we have detailed how to establish the connection between them, thereby extending previous findings in the literature.

To connect our results with physical applications, we have demonstrated that, within the framework provided by $U^mM$, the dynamics of the particle system can be characterized by the one-particle distribution function $f$, which obeys the Fokker--Planck equation ensuring the conservation of the average of particle trajectories passing through a given Cauchy surface. This conservation principle can also be derived through the vanishing divergence of the current density vector field $\mathsf{J}$ associated with $f$. Additionally, we have proven that the divergence of the entropy current $\mathsf{S}$ is non-negative, and that the divergence of the energy-momentum tensor field $\mathsf{T}$ is proportional to the particle current density, consistent with the expected behavior of a diffusion process.

Finally, we have illustrated the results by using several physically relevant spacetimes, spanning both cosmological and astrophysical applications.

While all the results presented in this work can be established without resorting to symplectic techniques, we have chosen to follow this approach anticipating future uses. For instance, it streamlines the incorporation of additional physical information, as is the case of electromagnetic fields: by simply modifying the symplectic form $\mathrm{d}\alpha$ to $\mathrm{d}\alpha+ q F$, where $q$ represents the particle charge and $F$ denotes the Faraday tensor describing the electromagnetic field in spacetime, we can seamlessly integrate electromagnetic effects into our analysis. It may also be interesting to analyze recent generalizations of the Sasaki metric proposed in \cite{kapsabelis2024finsler} within the context of modified gravity.

\bibliographystyle{amsplain}
\begin{bibdiv}
\begin{biblist}

\bib{acuna2022introduction}{article}{
      author={Acu\~na Cardenas, Ruben~O},
      author={Gabarrete, Carlos},
      author={Sarbach, Olivier},
       title={An introduction to the relativistic kinetic theory on curved
  spacetimes},
        date={2022},
     journal={General Relativity and Gravitation},
      volume={54},
      number={3},
       pages={23},
}

\bib{felix2010relativistic}{article}{
      author={Alc{\'a}ntara~F{\'e}lix, {Jos{\'e} Antonio}},
      author={Calogero, Simone},
       title={On a relativistic {F}okker--{P}lanck equation in kinetic theory},
        date={2011},
     journal={Kinetic and Related Models},
      volume={4},
      number={2},
       pages={401\ndash 426},
}

\bib{felix2013newtonian}{article}{
      author={Alc{\'a}ntara~F{\'e}lix, {Jos{\'e} Antonio}},
      author={Calogero, Simone},
       title={Newtonian limit and trend to equilibrium for the relativistic
  {F}okker--{P}lanck equation},
        date={2013},
     journal={Journal of {M}athematical {P}hysics},
      volume={54},
      number={3},
       pages={031502},
}

\bib{felix2014spatially}{article}{
      author={Alc{\'a}ntara~F{\'e}lix, {Jos{\'e} Antonio}},
      author={Calogero, Simone},
      author={Pankavich, Stephen},
       title={Spatially homogeneous solutions of the
  {V}lasov--{N}ordstr{\"o}m--{F}okker--{P}lanck system},
        date={2014},
     journal={Journal of Differential Equations},
      volume={257},
      number={10},
       pages={3700\ndash 3729},
}

\bib{alho2015dynamics}{article}{
      author={Alho, Artur},
      author={Calogero, Simone},
      author={Ramos, Maria P~Machado},
      author={Soares, Ana~J},
       title={Dynamics of {R}obertson--{W}alker spacetimes with diffusion},
        date={2015},
     journal={Annals of Physics},
      volume={354},
       pages={475\ndash 488},
}

\bib{AndraDoniRosyid}{article}{
      author={Andra, Doni},
      author={Rosyid, Muhammad~Farchani},
      author={Hermanto, Arief},
       title={Theoretical study of interaction between matter and curvature
  fluid in the theory of f({R})-gravity: Diffusion and friction},
        date={2019},
     journal={International Journal of Geometric Methods in Modern Physics},
      volume={16},
      number={03},
       pages={1950045},
}

\bib{Beem}{book}{
      author={Beem, John~K},
      author={Ehrlich, Paul},
      author={Easley, Kevin},
       title={Global {L}orentzian {G}eometry},
   publisher={Routledge},
        date={2017},
}

\bib{berger1965lectures}{book}{
      author={Berger, Marcel},
       title={Lectures on geodesics in {R}iemannian geometry},
   publisher={Tata Institute of Fundamental Research Bombay},
        date={1965},
      volume={33},
}

\bib{bertotti1959uniform}{article}{
      author={Bertotti, Bruno},
       title={Uniform electromagnetic field in the theory of general
  relativity},
        date={1959},
     journal={Physical Review},
      volume={116},
      number={5},
       pages={1331},
}

\bib{besse2012manifolds}{book}{
      author={Besse, Arthur~L.},
       title={Manifolds all of whose geodesics are closed},
   publisher={Springer Science \& Business Media},
        date={2012},
      volume={93},
}

\bib{Bishop}{book}{
      author={Bishop, Richard~L.},
      author={Crittenden, Richard~J.},
       title={{G}eometry of {M}anifolds},
   publisher={American Mathematical Society},
        date={2011},
        ISBN={9781470425586},
}

\bib{calogero2011kinetic}{article}{
      author={Calogero, Simone},
       title={A kinetic theory of diffusion in general relativity with
  cosmological scalar field},
        date={2011},
     journal={Journal of Cosmology and Astroparticle Physics},
      volume={2011},
      number={11},
       pages={016},
}

\bib{calogero2013cosmology}{article}{
      author={Calogero, Simone},
      author={Velten, Hermano},
       title={Cosmology with matter diffusion},
        date={2013},
     journal={Journal of Cosmology and Astroparticle Physics},
      volume={2013},
      number={11},
       pages={025},
}

\bib{chevalier2008relativistic}{article}{
      author={Chevalier, Claire},
      author={Debbasch, Fabrice},
       title={Relativistic diffusions: {A} unifying approach},
        date={2008},
     journal={Journal of Mathematical Physics},
      volume={49},
      number={4},
}

\bib{debbasch2007relativistic}{inproceedings}{
      author={Debbasch, Fabrice},
      author={Chevalier, Claire},
       title={Relativistic {S}tochastic {P}rocesses},
        date={2007},
   booktitle={Nonequilibrium {S}tatistical {M}echanics and {N}onlinear
  {P}hysics ({AIP} {C}onference {P}roceedings 913)},
      volume={913},
       pages={42\ndash 48},
}

\bib{dudley1966lorentz}{article}{
      author={Dudley, Richard~M},
       title={Lorentz-invariant {M}arkov processes in relativistic phase
  space},
        date={1966},
     journal={Arkiv f{\"o}r Matematik},
      volume={6},
      number={3},
       pages={241\ndash 268},
}

\bib{dunkel2009}{article}{
      author={Dunkel, Jorn},
      author={Hanggi, Peter},
       title={Relativistic {B}rownian {M}otion},
        date={2009},
     journal={Physics Reports},
      volume={471},
      number={1},
       pages={1\ndash 73},
}

\bib{ehlers}{book}{
      author={Ehlers, J{\"u}rgen},
       title={{General Relativity and Kinetic Theory}},
   publisher={Academic Press},
     address={New York and London},
        date={1971},
}

\bib{Elworthy1988}{incollection}{
      author={Elworthy, David},
       title={Geometric aspects of diffusions on manifolds},
        date={1988},
   booktitle={Lecture {N}otes in {M}athematics},
   publisher={Springer Berlin Heidelberg},
       pages={277\ndash 425},
         url={https://doi.org/10.1007/bfb0086183},
}

\bib{franchi_jan}{article}{
      author={Franchi, Jacques},
      author={Le~Jan, Yves},
       title={{R}elativistic {D}iffusions and {S}chwarzschild {G}eometry},
        date={2007},
     journal={Communications on Pure and Applied Mathematics},
      volume={60},
       pages={187\ndash 251},
}

\bib{franchi2012hyperbolic}{book}{
      author={Franchi, Jacques},
      author={Le~Jan, Yves},
       title={Hyperbolic dynamics and {B}rownian motions},
   publisher={Oxford Science Publications},
        date={2012},
}

\bib{haba2017thermodynamics}{article}{
      author={Haba, Zbigniew},
       title={Thermodynamics of diffusive {DM/DE} systems},
        date={2017},
     journal={General Relativity and Gravitation},
      volume={49},
       pages={1\ndash 21},
}

\bib{Hsu2002}{book}{
      author={Hsu, Elton},
       title={Stochastic {A}nalysis on {M}anifolds},
   publisher={{G}raduate {S}tudies in {M}athematics, {A}merican {M}athematical
  {S}ociety},
        date={2002},
         url={https://doi.org/10.1090/gsm/038},
}

\bib{ehlers1973survey}{incollection}{
      author={J{\"u}rgen, Ehlers},
       title={Survey of {G}eneral {R}elativity {T}heory},
        date={1973},
   booktitle={Relativity, {A}strophysics and {C}osmology: {P}roceedings of the
  {S}ummer {S}chool {H}eld, 14--26 {A}ugust, 1972 at the {B}anff {C}entre,
  {B}anff, {A}lberta},
   publisher={Springer},
       pages={1\ndash 125},
}

\bib{kapsabelis2024finsler}{article}{
      author={Kapsabelis, E},
      author={Saridakis, Emmanuel~N},
      author={Stavrinos, PC},
       title={Finsler--randers--sasaki gravity and cosmology},
        date={2024},
     journal={The European Physical Journal C},
      volume={84},
      number={5},
       pages={538},
}

\bib{kasner1925algebraic}{article}{
      author={Kasner, Edward},
       title={An algebraic solution of the {E}instein equations},
        date={1925},
     journal={Transactions of the American Mathematical Society},
      volume={27},
      number={1},
       pages={101\ndash 105},
}

\bib{KN}{book}{
      author={Kobayashi, Shoshichi},
      author={Nomizu, Katsumi},
       title={Foundations of {D}ifferential {G}eometry, vol. 2},
   publisher={Wiley, New York},
        date={1996},
        ISBN={978-0-471-15732-8},
}

\bib{lasuik2019}{article}{
      author={Lasuik, J},
      author={Shalchi, A},
       title={{Subspace approximations to the cosmic ray {F}okker--{P}lanck
  equation}},
        date={2019},
        ISSN={0035-8711},
     journal={Monthly Notices of the Royal Astronomical Society},
      volume={485},
       pages={1635\ndash 1650},
         url={https://doi.org/10.1093/mnras/stz474},
}

\bib{mok1979complete}{article}{
      author={Mok, Kam-Ping},
       title={Complete lifts of tensor fields and connections to the frame
  bundle},
        date={1979},
     journal={Proceedings of the London Mathematical Society},
      volume={3},
      number={1},
       pages={72\ndash 88},
}

\bib{sarbach2014geometry}{article}{
      author={Olivier~Sarbach, Thomas~Zannias},
       title={The geometry of the tangent bundle and the relativistic kinetic
  theory of gases},
        date={2014},
     journal={Classical and Quantum Gravity},
      volume={31},
      number={8},
       pages={085013},
         url={https://dx.doi.org/10.1088/0264-9381/31/8/085013},
}

\bib{o1983semi}{book}{
      author={O'Neill, Barrett},
       title={Semi-{R}iemannian {G}eometry with {A}pplications to
  {R}elativity},
   publisher={Academic Press},
        date={1983},
}

\bib{peeters2008}{article}{
      author={Peeters, Arthur~G},
      author={Strintzi, Dafni},
       title={The {F}okker--{P}lanck equation, and its application in plasma
  physics},
        date={2008},
     journal={Annalen der Physik},
      volume={520},
      number={2-3},
       pages={142\ndash 157},
}

\bib{rindler1998birkhoff}{article}{
      author={Rindler, Wolfgang},
       title={Birkhoff's theorem with $\lambda$-term and {B}ertotti-{K}asner
  space},
        date={1998},
     journal={Physics Letters A},
      volume={245},
      number={5},
       pages={363\ndash 365},
}

\bib{sarbach2013relativistic}{inproceedings}{
      author={Sarbach, Olivier},
      author={Zannias, Thomas},
       title={Relativistic kinetic theory: {A}n introduction},
organization={American Institute of Physics},
        date={2013},
   booktitle={{AIP} {C}onference {P}roceedings},
      volume={1548},
       pages={134\ndash 155},
}

\bib{schleich2010simple}{article}{
      author={Schleich, Kristin},
      author={Witt, Donald~M},
       title={A simple proof of {B}irkhoff's theorem for cosmological
  constant},
        date={2010},
     journal={Journal of Mathematical Physics},
      volume={51},
      number={11},
       pages={112502},
}

\bib{serva2021brownian}{article}{
      author={Serva, Maurizio},
       title={Brownian motion at the speed of light: A new lorentz invariant
  family of processes},
        date={2021},
     journal={Journal of Statistical Physics},
      volume={182},
      number={3},
       pages={59},
}

\bib{calogero2012cosmological}{article}{
      author={Simone, Calogero},
       title={Cosmological models with fluid matter undergoing velocity
  diffusion},
        date={2012},
     journal={Journal of Geometry and Physics},
      volume={62},
      number={11},
       pages={2208\ndash 2213},
}

\bib{synge1934energy}{article}{
      author={Synge, John~Lighton},
       title={The energy tensor of a continuous medium},
        date={1934},
     journal={Royal Society of Canada},
      volume={28},
       pages={127\ndash 171},
}

\bib{Th1}{article}{
      author={Thorpe, John~A.},
       title={Curvature invariants and space--time singularities},
        date={1977},
     journal={Journal of Mathematical Physics},
      volume={18},
      number={5},
       pages={960\ndash 964},
}

\end{biblist}
\end{bibdiv}

\end{document}